
\documentclass[conference,a4paper]{IEEEtran}

\addtolength{\topmargin}{9mm}

\usepackage[utf8]{inputenc} 
\usepackage[T1]{fontenc}
\usepackage{url}              
\usepackage{cite}             

\usepackage[cmex10]{amsmath}  
\interdisplaylinepenalty=1000 
\usepackage{mleftright}       
\mleftright                   

\usepackage{graphicx}         
\usepackage{booktabs}         

\usepackage{subcaption}

\usepackage{caption}

\usepackage{sidecap}





\hyphenation{op-tical net-works semi-conduc-tor}
\usepackage{hyperref}
\hypersetup{colorlinks,allcolors=black}
\usepackage{scalerel}
\usepackage[section]{placeins}
\usepackage[utf8]{inputenc} 
\usepackage[T1]{fontenc}
\usepackage{url}
\usepackage{ifthen}
\usepackage{cite}
\usepackage[cmex10]{amsmath} 
\usepackage{amssymb,lipsum} 
\usepackage{ bm}
\allowdisplaybreaks
\usepackage{amsthm}                       
\usepackage{tikz}
\usetikzlibrary{patterns}
\usepackage{pgfplots}
\pgfplotsset{compat=1.17}

\pgfplotsset{
    legend image with text/.style={
        legend image code/.code={%
            \node[anchor=center] at (0cm,0cm) {#1};
        }
    },
}
\usepgfplotslibrary{fillbetween}
\usepackage{graphicx}
\usepackage{url}
\usepackage{widetext}
\usepackage{float, stfloats}
\usepackage{latexsym}

\newcommand{\cS}{\bm{\mathcal{S}}}

\newcommand{\bX}{\bm{X}}
\newcommand{\bS}{\bm{S}}
\newcommand{\bSig}{\bm{\Sigma}}
\newcommand{\bGam}{\bm{\Gamma}}
\newcommand{\bY}{\bm{Y}}
\newcommand{\bZ}{\bm{Z}}
\newcommand{\bA}{\bm{A}}
\newcommand{\bO}{\bm{O}}
\newcommand{\bU}{\bm{U}}
\newcommand{\bV}{\bm{V}}

\newcommand{\bG}{\bm{G}}
\newcommand{\bB}{\bm{B}}

\newcommand{\cY}{\bm{\mathcal{Y}}}

\newcommand{\cZ}{\bm{\mathcal{Z}}}

\newcommand{\bu}{\bm{u}}
\newcommand{\bw}{\bm{w}}
\newcommand{\ba}{\bm{a}}

\newcommand{\br}{\bm{r}}
\newcommand{\bl}{\bm{l}}
\newcommand{\bb}{\bm{b}}
\newcommand{\bbeta}{\bm{\eta}}
\newcommand{\bE}{\mathbb{E}}
\newcommand{\ci}{\mathsf{i}}

\newcommand{\bs}{\mathbf{s}}
\newcommand{\bM}{\bm{M}}
\newcommand{\bv}{\bm{v}}

\newcommand{\bR}{\mathbb{R}}

\newcommand{\sH}{\mathsf{H}}
\newcommand{\bI}{\bm{I}}
\DeclareMathOperator{\Tr}{Tr}

\newcommand{\bsig}{\bm{\sigma}}

\newtheorem{theorem}{Theorem}

\newtheorem{lemma}{\textbf{Lemma}}
\newtheorem{assumption}{Assumption}

\newtheorem{statement}[theorem]{\textbf{Statement}}
\newtheorem{proposition}{Proposition}
\begin{document}

\title{Rectangular Rotational Invariant Estimator for General Additive Noise Matrices} 

\author{\IEEEauthorblockN{Farzad Pourkamali and Nicolas Macris}
\IEEEauthorblockA{SMILS, EPFL, Lausanne, Switzerland}
Emails: \{farzad.pourkamali,nicolas.macris\}@epfl.ch}

\maketitle

\begin{abstract}
 We propose a \textit{rectangular rotational invariant estimator} to recover a real matrix from noisy matrix observations coming from an arbitrary additive rotational invariant perturbation, in the large dimension limit. Using the Bayes-optimality of this estimator, we derive the asymptotic minimum mean squared error (MMSE). For the particular case of Gaussian noise, we find an explicit expression for the MMSE in terms of the limiting singular value distribution of the observation matrix. Moreover, we prove a formula linking the asymptotic mutual information and the limit of log-spherical integral of rectangular matrices. We also provide numerical checks for our results, which match our theoretical predictions and known Bayesian inference results.
\end{abstract}

\section{Introduction}
Matrix denoising is the problem of removing noise from a given data matrix while preserving important features or structure of the signal. 
We consider a simple, yet general setting, where the noise is additive but can be {\it structured}. Suppose the ground-truth matrix $\bS \in \bR^{N \times M}$ is distributed according to a rotationally invariant prior, i.e. $P_{S}(\bS) = P_{S}( \bU \bS \bV^T)$ for any orthogonal matrices $\bU \in \bR^{N \times N}, \bV \in \bR^{M \times M}$. The matrix $\bS$ is corrupted by an additive noise, and we observe:
\begin{equation}
    \bY = \sqrt{\lambda} \bS + \bZ 
    \label{observation-matrix}
\end{equation}
where $\bZ \in \bR^{N \times M}$ is distributed according to a rotationally invariant prior (not necessarily Gaussian), and $\lambda \in \bR_+$ is proportional to the signal-to-noise-ratio (SNR). We assume that $M$ scales like $N$, and $N/M \to \alpha$. Moreover, we assume that the entries of $\bS$ and $\bZ$ are of the order $O(\frac{1}{\sqrt{N}})$. This scaling is such that the singular values of $\bS$, $\bZ$ and $\bY$ are of the order $O(1)$ as $N \to \infty$. 
Studying the problem for the case $\alpha \in (0,1]$ suffices. Indeed, suppose the observation matrix $\bY \in \bR^{N \times M}$ has dimensions $N>M$ (so $\alpha >1$),  then exchanging the role of $M, N$, we can apply our results to the matrix $\bY^T$ with aspect ratio $1/\alpha \in (0,1)$. 

Given its fundamental role, model \eqref{observation-matrix} has gained much attention from the theoretical and algorithmic point of view. For a noise matrix $\bZ$ with i.i.d. Gaussian entries, when the hidden signal $\bS$ is low-rank ($O(1)$ compared to the size), the problem is well studied, and its fundamental limits have been derived under various settings, see \cite{korada2009exact, dia2016mutual, lelarge2019fundamental, pourkamali2021mismatched, barbier2022price, lesieur2015mmse, barbier2019adaptive, barbier2019adaptive-b, miolane2017fundamental, luneau2020high, pourkamali2022mismatched}. Recently, the rotationally invariant noise and low-rank signal is studied in \cite{fan2022approximate, barbier2022bayes}, and an optimal AMP-based algorithm is proposed in \cite{barbier2022bayes}.

In the setting where the rank of the signal $\bS$ grows with the dimension $N$ and the noise is Gaussian, the problem has been studied in \cite{kabashima2016phase, barbier2022statistical, maillard2022perturbative, troiani2022optimal}, and denoising algorithms are proposed in \cite{troiani2022optimal, camilli2022matrix, bodin2023gradient}. In the symmetric seting, where both $\bS$ and $\bZ$ are symmetric (and hence $N=M$), and general rotationally invariant noise, the Rotationally Invariant Estimator (RIE) has been proposed in \cite{bun2016rotational} which is be shown to be Bayes-optimal. 

In this contribution, we propose a \textit{rectangular RIE}, which is the generalization of the RIE introduced in \cite{bun2016rotational} to rectangular additive noise matrices.  An estimator $\hat{\Xi}(\bY)$, is called a Rotational Invariant Estimator (RIE) if for any orthogonal matrices $\bU \in \bR^{N \times N}, \bV \in \bR^{M \times M}$, we have:
\begin{equation*}
    \bU \hat{\Xi}(\bY) \bV^T = \hat{\Xi}(\bU \bY \bV^T)
\end{equation*}
In this case, it turns out that the singular vectors of $\hat{\Xi}(\bY)$ are the same as those of the matrix $\bY$ \cite{stein1975estimation,takemura1984orthogonally}. Consequently, as discussed in section \ref{RIE}, the best RIE depends on the overlap between the singular vectors of $\bS$,$\bY$. This overlap can be computed in the large $N$ limit using results from random matrix theory and the replica trick, and eventually we obtain an RIE which only depends on the observation matrix and the knowledge of the noise distribution. Authors in \cite{bun2016rotational} call that type of estimator "miracle" estimator because it does not require any knowledge of the signal distribution and at the same time it is Bayesian-optimal. Moreover, using Bayesian-optimality of this estimator, we can compute the MMSE of estimation in model \eqref{observation-matrix}.

For the particular case of Gaussian noise, we are able to compute an explicit expression for the MMSE. This enables us to access the mutual information via the I-MMSE relation \cite{guo2005mutual} valid for Gaussian noise. Alternatively, under suitable assumptions, we can {\it prove} by independent methods that the mutual information is linked to the asymptotic log-spherical integral, which has been studied in physics and mathematics literature \cite{guionnet2002large, guionnet2021large}. Therefore, with our analysis, we are able derive the asymptotic log-spherical integral in a special case.

 \textit{Organization}: We begin section \ref{result} with a brief introduction to the random matrix theory tools we use, then we state our main results. In section \ref{RIE}, we sketch the derivation of the rectangular RIE. Section \ref{MMSE-computation} describes the computation of the MMSE, followed by numerical results in section \ref{numericals}.

\textit{Notation}: For the vector $\bsig \in \bR^N$ (and $\bm{\gamma}, \bm{\xi}$), $\bSig \in \bR^{N \times M}$ (and $\bm{\Gamma}, \bm{\Xi}$) denotes a matrix constructed as $
    \bSig = \left[
\begin{array}{c|c}
\bSig_N & \mathbf{0}_{N \times (M-N)}
\end{array}
\right]
$ with $\bSig_N \in \bR^{N \times N}$ a diagonal matrix with diagonal $\bsig$. For a sequence of matrices $\bA$ of growing size, we denote the limiting empirical singular value distribution (ESD) by $\mu_A$, and the limiting eigenvalue distribution $\bA \bA^T$ by $\rho_A$.
The free rectangular convolution \cite{benaych2009rectangular} with ratio $\alpha \in [0,1]$  of two probability distributions is denoted 
 by $\boxplus_\alpha$.

\section{Main Results}\label{result}
\subsection{Preliminaries on Random Matrix Theory}
The ESD of $\bS$ is defined as:
\begin{equation*}
    \mu_{S}^{(N)}(x) = \frac{1}{N} \sum_{i=1}^N \delta(x - \sigma_i^S )
\end{equation*}
where $(\sigma_i^S)_{1 \leq i \leq N}$ are the singular values of $\bS$.

For a probability measure $\mu$ with support contained in $[-K, K ]$ with $K>0$, we define a generating function of (even) moments $\mathcal{M}_{\mu} : [0,K^{-2}] \to \bR_+$
as
\begin{equation*}
    \mathcal{M}_{\mu} (z) = \int \frac{1}{1-t^2 z} \mu(t) \, d t - 1
\end{equation*}
For $\alpha \in [0,1]$, define $T^{(\alpha)}(z) = (\alpha z +1)(z+1)$, and $\mathcal{H}_{\mu}^{(\alpha)}(z) = z T^{(\alpha)}\big(\mathcal{M}_{\mu} (z)\big)$. The \textit{rectangular R-transform} is defined as:
\begin{equation*}
    \mathcal{C}_{\mu}^{(\alpha)}(z) = {T^{(\alpha)}}^{-1}\Big(\frac{z}{{\mathcal{H}_{\mu}^{(\alpha)}}^{-1}(z)} \Big)
\end{equation*}

For a probability density $\mu(x)$ on $\bR$, the \textit{Stieltjes}  (or \textit{Cauchy}) transform  is defined as 
\begin{equation*}
   \mathcal{G}_\mu (z) = \int_{\bR} \frac{1}{z - x} \mu(x) \, dx \hspace{10 pt} \text{for } z \in \mathbb{C} \backslash {\rm supp}(\mu)
\end{equation*}
By  Plemelj formulae we have for $x\in \mathbb{R}$,
\begin{equation}
    \lim_{y \to 0^+} \mathcal{G}_\mu(x - \ci y) = \pi \sH [\mu](x) + \pi \ci \mu(x) 
    \label{Plemelj formulae}
\end{equation}
with $\sH [\mu](x) = {\rm p.v.} \frac{1}{\pi} \int_{\bR} \frac{\mu(t)}{x - t}  d t$ the \textit{Hilbert} transform of $\mu$.
\subsection{Rectangular RIE}
Given the matrix $\bY \in \bR^{N \times M}$ with svd $\bY = \bU_Y \bm{\Gamma} \bV_Y^T$, the \textit{rectangular RIE} is constructed as:
    \begin{equation*}
        \Xi^*(\bY) = \bU_Y \bm{\Xi}^* \bV_Y^T
    \end{equation*}
with $\bm{\Xi}^* $ is a rectangular diagonal matrix of singular values $\xi_i^*$ ($1 \leq i \leq N$):
\begin{equation}
\begin{split}
       \xi_i^* &=\frac{1}{\sqrt{\lambda}} \Bigg[ \gamma_i - \frac{1}{\pi \bar{\mu}_{Y}(\gamma_i)} {\rm Im} \, \mathcal{C}^{(\alpha)}_{\mu_Z}\bigg( \frac{1- \alpha}{\gamma_i} \pi \sH [\bar{\mu}_{Y}](\gamma_i) \\
       &\hspace{10pt}+ \alpha \big( \pi \sH [\bar{\mu}_{Y}](\gamma_i)\big)^2  - \alpha \big( \pi \bar{\mu}_{Y}(\gamma_i)\big)^2 \\ 
       &\hspace{20pt}+\ci \pi \bar{\mu}_{Y}(\gamma_i) \big(\frac{1-\alpha}{\gamma_i} + 2 \alpha \pi \sH [\bar{\mu}_{Y}](\gamma_i) \big) \bigg) \Bigg]
\end{split}
\label{rect-RIE}
\end{equation}
where $ \bar{\mu}_{Y}(\gamma) = \frac{1}{2}(\mu_Y(\gamma) + \mu_Y(-\gamma))$ is the symmetrization of the limiting ESD of $\bY$.
Note that 
this estimator does not require any information about the prior $P_{S}(\bS)$.

The Mean Squared Error (MSE) is defined as usual
\begin{equation*}
    {\rm MSE}_N(\hat{\Xi}) = \frac{1}{N} \bE \Big[ \big\| \bS - \hat{\Xi}(\bY) \big\|_F^2 \Big]
\end{equation*}
where the expectation is over the prior $P_S$ and noise distribution. The Minimum MSE (${\rm MMSE}_N$) is the best possible reconstruction error, and is achieved by $\bE[\bS | \bY ]$.

One can see that for  model \eqref{observation-matrix}, the posterior mean estimator is a RIE, therefore since the estimator \eqref{rect-RIE} is (conjectured to be) the best among RIE class, we can conclude that it is optimal, i.e. its MSE is equal to the MMSE. The rhs in \eqref{rect-RIE} is a {\it function} of singular values of $\bY$ and is denoted by $\xi^* : {\rm supp}(\mu_Y)\to \mathbb{R}$.

\begin{statement}[\textbf{MMSE}]\label{General-MMSE-statement}
    Assume that the ESD of $\bS$, $\bZ$ converge to well-defined measures $\mu_S, \mu_Z$ with bounded second moments. We have:
    \begin{equation}
        \lim_{N \to \infty} {\rm MMSE}_N(\lambda) = \int x^2 \mu_S(x) \, dx - \int {\xi^*(x)}^2 \mu_Y(x) \, dx
        \label{MMSE-eq}
    \end{equation}
    where $\mu_Y$ is the limiting ESD of $\bY$, $\mu_Y = \mu_S \boxplus_{\alpha} \mu_Z$.
\end{statement}

\subsection{Gaussian Noise}
Consider the noise matrix $\bZ$ with i.i.d. Gaussian entries of variance $\frac{1}{N}$. In this case, $\mathcal{C}^{(\alpha)}_{\mu_Z}(z) = \frac{1}{\alpha}z$, and the estimator in \eqref{rect-RIE} reduces to:
\begin{equation}
    \xi_i^* = \frac{1}{\sqrt{\lambda}} \Big[ \gamma_i - \big( \frac{1-\alpha}{\alpha} \frac{1}{\gamma_i} + 2 \pi \sH [\bar{\mu}_{Y}](\gamma_i) \big) \Big]
    \label{G-RIE}
\end{equation}
This estimator was previously derived in \cite{troiani2022optimal}  using the Feynman-Hellman theorem.
Given this rather simple expression for the optimal singular values, we can compute the asymptotic MMSE for the particular case of Gaussian noise.

\begin{statement}[\textbf{Gaussian MMSE}]\label{Gaus-MMSE-statement}
    Assume that the ESD of $\bS$ converges to a well-defined measure $\mu_S$ with compact support and bounded second moment. Under Gaussian noise we have:
    \begin{equation}
    \begin{split}
        \lim_{N \to \infty} {\rm MMSE}_N(\lambda) = \frac{1}{\lambda} \Big[ \frac{1}{\alpha} - &\big(\frac{1}{\alpha} -1 \big)^2 \int \frac{\mu_Y(x)}{x^2}\, dx \\
        &- \frac{\pi^2}{3} \int {\mu_Y(x)}^3 \, dx \Big]
    \end{split}
        \label{G-MMSE-eq}
    \end{equation}
    where $\mu_Y$ is the limiting ESD of $\bY$, $\mu_Y = \mu_S \boxplus_{\alpha} \mu_{\rm MP}$.
\end{statement}

From the asymptotic MMSE, we can access the asymptotic mutual information $ \frac{1}{M N} \mathcal{I}_N(\bS; \bY)$, using the I-MMSE relation \cite{guo2005mutual} which for model \eqref{observation-matrix} states:
\begin{equation*}
     {\rm MMSE}_N(\lambda) = 2 \frac{M}{N} \frac{d}{d \lambda} \frac{1}{M N} \mathcal{I}_N(\bS; \bY)
\end{equation*}
 Concavity of the mutual information w.r.t. the SNR, implies this relation also holds in the limit $N \to \infty$ (assuming limits exist).  Therefore, it suffices to compute the integral of the asymptotic MMSE over $\lambda$ to find the asymptotic mutual information. On the other hand, by an independent analysis, we {\it prove} that the asymptotic mutual information is linked to an asymptotic spherical integral. The rectangular spherical integral is defined for two matrices $\bA, \bB \in \bR^{N \times M}$ as:
 \begin{equation*}
    I_{N,M}(\bA, \bB) := \iint D \bU  \,  D \bV \, e^{N \Tr [\bA^T \bU  \bB \bV^T ]}
 \end{equation*}
where $D \bU, D \bV$ denotes the Haar measure over the groups of $N \times N$, $M \times M$ orthogonal matrices. The asymptotic behavior of these integrals has been studied in \cite{guionnet2021large} which proves that the limit $\lim_{N \to \infty} \frac{1}{N^2} \ln I_{N,M} (\bA, \bB)$ exists and equals a variational formula given in terms of limiting ESD of $\bA, \bB$. We use this result to prove the following theorem. Let $J[\mu_{\sqrt{\lambda} S}, \mu_{\sqrt{\lambda} S}\boxplus_{\alpha} \mu_{\rm MP}] = \lim_{N\to +\infty} \frac{1}{N^2} \ln I_{N,M} ( \sqrt{\lambda} \bS, \bY)$, where $\mu_{\sqrt{\lambda} S}$ is the limiting spectral distribution of $\sqrt{\lambda} \bS$, and $\mu_{\rm MP}$ is the  Marchenko-Pastur distribution. 
\begin{theorem}[\textbf{Mutual Information}]\label{theoremMain}
Under suitable assumptions on ESD of $\bS$, we have:
\vspace{-1mm}
\begin{equation}
\begin{split}
    \lim_{N \to \infty} \! \frac{1}{M N} \mathcal{I}_N(\bS; \bY)  \! &= \\
    & \hspace{-40pt} \lambda \alpha \int \! x^2 \mu_S(x) \, dx  - J[\mu_{\sqrt{\lambda} S}, \mu_{\sqrt{\lambda} S}\boxplus_{\alpha} \mu_{\rm MP}]
\end{split}
    \label{asymp-mI-th}
\end{equation}
\label{main-th}
\end{theorem}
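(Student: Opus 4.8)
\ The plan is to evaluate $\mathcal{I}_N(\bS;\bY)=\bE[\ln P(\bY\mid\bS)]-\bE[\ln P(\bY)]$ directly and to recognise the normalising constant of the posterior as a rectangular spherical integral, rather than going through the I--MMSE relation. Since $\bZ$ has i.i.d.\ $\mathcal{N}(0,1/N)$ entries, $P(\bY\mid\bS)=(N/2\pi)^{NM/2}\exp(-\tfrac N2\|\bY-\sqrt{\lambda}\bS\|_F^2)$, so the first term equals $\tfrac{NM}{2}\ln\tfrac N{2\pi}-\tfrac N2\bE\|\bZ\|_F^2=\tfrac{NM}{2}\ln\tfrac N{2\pi}-\tfrac{NM}{2}$ exactly. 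For the marginal, I would expand the square and pull out the $\bY$-dependent factors,
\begin{equation*}
P(\bY)=\Big(\tfrac N{2\pi}\Big)^{NM/2}e^{-\frac N2\|\bY\|_F^2}\,\bE_{\bS'}\!\Big[\exp\!\big(N\sqrt{\lambda}\,\Tr[\bY^{T}\bS']-\tfrac{N\lambda}{2}\|\bS'\|_F^2\big)\Big],
\end{equation*}
with $\bS'$ an independent copy of $\bS$.

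Next I would exploit the rotational invariance of $P_S$: writing the SVD $\bS'=\bU\bm{\Sigma}'\bV^{T}$ with $\bU,\bV$ Haar-distributed and independent of the singular values (whose empirical distribution converges to $\mu_S$), the term $\|\bS'\|_F^2=\|\bm{\Sigma}'\|_F^2$ is rotation-independent and, under the compact-support hypothesis on $\mu_S$, concentrates on $N\!\int x^2\mu_S(x)\,dx$, while conditioning on $\bm{\Sigma}'$ the average over $(\bU,\bV)$ of $\exp(N\,\Tr[\bY^{T}\bU(\sqrt{\lambda}\bm{\Sigma}')\bV^{T}])$ is, by definition, the rectangular spherical integral $I_{N,M}(\bY,\sqrt{\lambda}\bS')$, which depends on its arguments only through their singular values. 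Since $\bZ$ is bi-orthogonally invariant Gaussian it is asymptotically rectangular-free of $\sqrt{\lambda}\bS$, so the ESD of $\bY$ converges almost surely to $\mu_Y=\mu_{\sqrt{\lambda}S}\boxplus_{\alpha}\mu_{\mathrm{MP}}$, and the variational formula of \cite{guionnet2021large} gives $\tfrac1{N^2}\ln I_{N,M}(\bY,\sqrt{\lambda}\bS')\to J[\mu_{\sqrt{\lambda}S},\mu_{\sqrt{\lambda}S}\boxplus_{\alpha}\mu_{\mathrm{MP}}]$. Putting the two displays together, using $\bE\|\bY\|_F^2=\lambda\,\bE\|\bS\|_F^2+\bE\|\bZ\|_F^2=\lambda N\!\int x^2\mu_S(x)\,dx+M+o(N)$, the $\tfrac{NM}{2}\ln\tfrac N{2\pi}$ and $\tfrac{NM}{2}$ contributions cancel between the two expectations, and dividing by $NM$ and letting $N/M\to\alpha$ leaves exactly \eqref{asymp-mI-th}.

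The moment computations are routine; the delicate parts, which I expect to be the main obstacle, are the exchanges of limits and expectations glossed over above. Concretely: (i) replacing $\bE_{\bS'}[e^{N^{2}(\cdots)}]$ by $e^{N^{2}\lim(\cdots)}$ is a Laplace/large-deviation step that requires concentration of both $\tfrac1N\|\bS'\|_F^2$ and $\tfrac1{N^2}\ln I_{N,M}(\bY,\sqrt{\lambda}\bS')$ around their limits, together with a deterministic exponential upper bound on $\tfrac1{N^2}\ln I_{N,M}$ uniform in $N$ to control the tails of the $\bS'$-average; (ii) passing from the almost-sure limit $\tfrac1{N^2}\ln I_{N,M}(\bY,\sqrt{\lambda}\bS')\to J$ to convergence of its $\bY$-average requires uniform integrability of $\tfrac1{N^2}\ln I_{N,M}$; and (iii) one must check that the hypotheses of \cite{guionnet2021large} (notably the compact-support assumption on $\mu_S$, as in Statement~\ref{Gaus-MMSE-statement}) hold for $\mu_{\sqrt{\lambda}S}$ and $\mu_Y$ over the whole range of the parameter. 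These are precisely the ``suitable assumptions on ESD of $\bS$'' invoked in the statement. As an internal consistency check, differentiating \eqref{asymp-mI-th} in $\lambda$ and invoking the I--MMSE relation should reproduce the Gaussian MMSE formula \eqref{G-MMSE-eq}, equivalently yielding a closed form for $\partial_\lambda J$ along the one-parameter family $\{\mu_{\sqrt{\lambda}S}\}_{\lambda>0}$.
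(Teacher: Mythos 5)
Your starting point---rewriting the mutual information via $P(\bY\mid\bS)$ and $P(\bY)$, and recognising that the Haar average inside the $\bS'$-integral is the rectangular spherical integral $I_{N,M}(\bY,\sqrt{\lambda}\bm{\Sigma}')$---is the same as the paper's: there, the free energy $F_N(\lambda)=-\tfrac{1}{MN}\bE\ln Z(\bY)$ is exactly this log-marginal up to explicit Gaussian terms, and $\ln Z$ is identified with a spherical integral plus a quadratic term, cf.\ \eqref{Z2-comp1}. The divergence is at the very step you flag as the main obstacle: the interchange $\tfrac{1}{N^2}\ln\bE_{\bm{\Sigma}'}\bigl[\exp\bigl(N^2 f(\hat{\mu}_{\bm{\Sigma}'})\bigr)\bigr]\to f(\mu_S)$. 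Your proposed patch---a uniform deterministic bound on $\tfrac{1}{N^2}\ln I_{N,M}$---only gives $\limsup\leq\sup_\nu f(\nu)$, which can strictly exceed $f(\mu_S)$. To close the gap by a Laplace/Varadhan argument you would need a large-deviations principle at speed $N^2$ for the empirical singular-value distribution under the prior $P_S$, together with the genuine inequality $f(\nu)-I(\nu)<f(\mu_S)$ for all $\nu\neq\mu_S$; neither is available under Assumptions~\ref{assumptions on law} and~\ref{bounded-mom}, and neither is sketched in your proposal.

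The paper avoids Laplace entirely by a different device. It introduces the simple model $P_{\tilde S}$ in \eqref{model 2}, where the singular values are \emph{fixed} i.i.d.\ draws $\bsig^0$ from $\mu_S$. For this delta prior the partition function is \emph{exactly} $e^{-\frac{N\lambda}{2}\Tr{\bSig^0}^T\bSig^0}\,I_{N}(\sqrt{\lambda}\bSig^0,\tilde{\bY})$, with no expectation inside the log, so Proposition~\ref{asymp-free-energy-2-proposition} follows directly from \cite{guionnet2021large} and dominated convergence (your concern (ii)). The transfer back to the original prior $P_S$ is by the pseudo-Lipschitz continuity of the free energy in the singular-value prior, Proposition~\ref{pseudo-lip}, with the right-hand side controlled through $N\,W_2(\hat{\mu}_{\bsig},\hat{\mu}_{\bsig^0})^2\to 0$ (Lemma~\ref{W2-dis}). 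This interpolation replaces your missing large-deviations input and works under the paper's weak assumptions. Your remaining bookkeeping, the use of rotational invariance to recognise the spherical integral, and the I--MMSE consistency check are all fine, but the proposal as written does not supply an argument that would actually close the interchange-of-limits gap.
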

\vspace{-2mm}
The assumptions in Theorem \ref{main-th} are mainly the assumption needed for the existence of the limit of the spherical integral stated in Theroem 1.1 in \cite{guionnet2021large}. Since the main focus of the current contribution is the RIE introduced in \eqref{rect-RIE}, we postpone the detailed proof of this theorem to a longer version paper (a sketch of the main steps is found in Appendix \ref{Proof-Thm1}).

\section{Rectangular RIE}\label{RIE}
To derive the estimator \eqref{rect-RIE}, we use the method of \cite{bun2016rotational}. 
Assuming a fixed ground-truth signal $\bS$ the model is equivalent to
\begin{equation}
   \bY = \bS + \bU \bZ \bV^T
   \label{RIE-observ-model}
\end{equation}
with $\bZ$ a fixed matrix with limiting singular value distribution $\mu_Z$, and $\bU \in \bR^{N \times N}, \bV \in \bR^{M \times M}$ random orthogonal matrices.
Given the data matrix $\bY = \bU_Y \bm{\Gamma} \bV_Y^T$, a RIE is constructed as $\hat{\Xi}(\bY) = \bU_Y \bm{\Xi} \bV_Y^T$. Note that for convenience the SNR parameter has been absorbed into $\bS$, so to obtain the estimator for model \eqref{observation-matrix}, it should be divided by $\sqrt{\lambda}$. Our goal is to have the minimum squared error w.r.t. the signal $\bS$. Let the singular values of $\bS$ be $\sigma_1, \dots, \sigma_N$. The squared error for this estimator can be written as
\begin{align}
    &\frac{1}{N} \big\| \bS - \Xi(\bY) \big\|_F^2 \label{MSE-expansion}
    \\ &= \frac{1}{N} \sum_{i=1}^N {\sigma_i}^2 + \frac{1}{N} \sum_{i=1}^N \xi_i^2 - \frac{2}{N} \sum_{i,j=1}^N \xi_i  \sigma_j \big(  \bu_i^T \bs_j^{(l)}\big) \big(  \bv_i^T \bs_j^{(r)} \big) \notag
\end{align}
where $\bs_i^{(l)}, \bs_i^{(r)}$ are the left and right singular vectors of $\bS$, and $\bu_i,\bv_i$ are columns of $\bU_Y$ and $\bV_Y$. Minimizing over $\xi_i$'s, we find :
\begin{equation}
    \xi_i^* = \sum_{j=1}^N \sigma_j \big(  \bu_i^T \bs_j^{(l)}\big) \big(  \bv_i^T \bs_j^{(r)} \big)
\label{optimal-sv}
\end{equation}
In \cite{bun2016rotational}, \eqref{optimal-sv} is called "oracle" estimator, since it requires the knowledge of the signal. We assume that in the large-$N$ limit, $\xi_i^*$'s can be approximated by the expectation, $\xi_i^* \approx \sum_{j=1}^N \sigma_j \Big \langle \big(  \bu_i^T \bs_j^{(l)}\big) \big(  \bv_i^T \bs_j^{(r)} \big) \Big \rangle$, where the expectation $\langle -\rangle$  is over the singular vectors of $\bY$. When the signal $\bS$ has finite-rank, the overlap has been computed in \cite{benaych2012singular}. For the large-rank, we discuss the derivation of the overlap in the following.

\subsection{Relation Between Overlap and the Resolvent}
\begin{figure*}[b!]
\centering
\begin{minipage}{\textwidth}
\medskip
\vspace{-5mm}
\hrule
\begin{equation}
    \cY = \left[
\begin{array}{ccc}
\hat{\bU}_Y & \hat{\bU}_Y &  \mathbf{0}_{N \times (M-N)} \\
\hat{\bV}_Y^{(1)} & -\hat{\bV}_Y^{(1)} &  \bV_Y^{(2)}
\end{array}
\right] \left[
\begin{array}{ccc}
\bGam_N & \mathbf{0} & \mathbf{0}\\
\mathbf{0} & -\bGam_N &  \mathbf{0}\\
\mathbf{0} & \mathbf{0} & \mathbf{0} 
\end{array}
\right]  \left[
\begin{array}{ccc}
\hat{\bU}_Y & \hat{\bU}_Y &  \mathbf{0}_{N \times (M-N)} \\
\hat{\bV}_Y^{(1)} & -\hat{\bV}_Y^{(1)} &  \bV_Y^{(2)}
\end{array}
\right]^T
\label{eigen-cY}
\end{equation}
\end{minipage}
\end{figure*}
Construct the symmetric matrix $\cY \in \bR^{(N+M) \times (N+M)}$ from the matrix $\bY$,
\begin{equation*}
    \cY = \left[
\begin{array}{cc}
\mathbf{0}_{N\times N} & \bY \\
\bY^T & \mathbf{0}_{M\times M}
\end{array}
\right]
\end{equation*}
By Theorem 7.3.3 in \cite{horn2012matrix}, $\cY$ has the eigen-decomposition given in \eqref{eigen-cY}, with $\bV_Y = \left[
\begin{array}{cc}
\bV_Y^{(1)} & \bV_Y^{(2)}
\end{array}
\right]$ in which $\bV_Y^{(1)} \in \bR^{M \times N}$. And, $\hat{\bV}_Y^{(1)} = \frac{1}{\sqrt{2}} \bV_Y^{(1)}$, $\hat{\bU}_Y= \frac{1}{\sqrt{2}} \bU_Y$. Eigenvalues of $\cY$ are signed singular values of $\bY$, therefore the limiting eigenvalue distribution of $\cY$ (ignoring zero eigenvalues) is the same as the limiting symmetrized singular value distribution of $\bY$.

Define the resolvent of $\cY$ 
\begin{equation}
    \bG_{\mathcal{Y}}(z) = \left[
\begin{array}{c}
z \bI - \cY 
\end{array}
\right]^{-1}
\end{equation}
and denote the eigenvectors of $\cY$ by $\bw_i \in \bR^{M+N}$, $i = 1, \dots,M+N$. For $z = x - \ci y$ with $x \in \bR$ and $y \gg \frac{1}{N}$, we have:
\begin{equation}
     \bG_{\mathcal{Y}}(x- \ci y) = \sum_{k=1}^{2N} \frac{x + \ci y }{(x - \tilde{\gamma}_k)^2+y^2} \bw_k \bw^T_k 
\end{equation}
where $\tilde{\gamma}_k$ are the eigenvalues of $\cY$, which are in fact the (signed) singular values of $\bY$, $\tilde{\gamma}_1 = \gamma_1, \hdots, \tilde{\gamma}_N=\gamma_N, \tilde{\gamma}_{N+1}= - \gamma_1, \hdots, \tilde{\gamma}_{2N}=-\gamma_N$.

Define set of vectors $\br_i, \bl_i \in \bR^{N+M}$ for $i=1,\dots,N$ as:
\begin{equation*}
    \br_i = \left[
\begin{array}{c}
\mathbf{0}_N \\
\bs_i^{(r)}
\end{array}
\right] \hspace{1cm}
\bl_i =\left[
\begin{array}{c}
\bs_i^{(l)} \\
\mathbf{0}_M
\end{array}
\right]
\end{equation*}

We have
\begin{equation*}
\begin{split}
    \br_j^T \big( {\rm Im}\, \bG_{\mathcal{Y}}(x - \ci y) \big) \bl_j &= \sum_{k=1}^{2N} \frac{y }{(x - \tilde{\gamma}_k)^2+y^2} (\br_j^T\bw_k) (\bw^T_k \bl_j) \\
    &= \frac{1}{2} \frac{1}{N} \sum_{k=1}^{2 N} \frac{y }{(x - \tilde{\gamma}_k)^2+y^2} \tilde{O}_{k,j}
\end{split}
\end{equation*}
with $\tilde{O}_{k,j} = (-1)^{\mathbb{I}(k>N)} N \big(  \bu_k^T \bs_j^{(l)}\big) \big(  \bv_k^T \bs_j^{(r)} \big)$. We assume that this quantity  concentrates on its mean. It turns out that this mean is a function of the singular values $\gamma_k, \sigma_j$ denoted by $O(\gamma_k,\sigma_j) \equiv  \big \langle \tilde{O}_{k,j} \big \rangle$. Taking the limit $N \to \infty$, we find
\begin{equation*}
    \br_j^T \big( {\rm Im}\, \bG_{\mathcal{Y}}(x - \ci y) \big) \bl_j \approx \int_\bR \frac{y}{(x - t)^2+ y^2} O(t,\sigma_j) \bar{\mu}_{Y}(t) \, dt
\end{equation*}
where $O(t,\sigma_j)$ is extended (continuously) to arbitrary values inside the support of $\bar{\mu}_Y$ (the symmetrized limiting singular value distribution of $\bY$).

Sending $y \to 0$, we find 
\begin{equation}
    \br_j^T \big( {\rm Im}\, \bG_{\mathcal{Y}}(x- \ci y) \big) \bl_j \approx \pi \bar{\mu}_{Y}(x) O(x, \sigma_j) 
    \label{resolvent-overlap}
\end{equation}
Eq. \eqref{resolvent-overlap} is important because it enables us to investigate the overlap through the resolvent of $\cY$. In the next section, we derive a relation between this resolvent and the signal $\bS$ which will allow us to find the optimal singular values $\xi^*_i$'s in terms of the singular values of the observation matrix $\bY$. 

\subsection{Resolvent Relation}
In Appendix \ref{RIE-der}, we show that we have relation \eqref{resolvent-relation} for the resolvent $\bG_{\mathcal{Y}}(z)$, in which $\langle . \rangle$ is the expectation w.r.t. the singular vectors of $\bY$, and $\bG_{S^T S}$ is the resolvent matrix of $\bS^T \bS$.
\begin{figure*}[t]
\centering
\begin{minipage}{\textwidth}
\begin{equation}
\begin{split}
\langle \bG_{\mathcal{Y}}(z) \rangle &= \Bigg \langle \left[
\begin{array}{cc}
z^{-1} \bI_N + z^{-1} \bY \bG_{Y^TY}(z^2) \bY^T & \bY \bG_{Y^TY}(z^2) \\
\bG_{Y^TY}(z^2) \bY^T & z \bG_{Y^TY}(z^2)
\end{array}
\right] \Bigg \rangle\\
&\approx\left[
\begin{array}{cc}
(z-\zeta_a^*)^{-1} \bI_N + (z-\zeta_a^*)^{-1} \bS \bG_{S^T S} \big((z-\zeta_b^*)(z - \zeta_a^*)\big) \bS^T &   \bS \bG_{S^T S} \big((z-\zeta_b^*)(z - \zeta_a^*)\big)  \\
\bG_{S^T S}\big((z-\zeta_b^*)(z - \zeta_a^*)\big) \bS^T & (z - \zeta_a^*) \bG_{S^T S} \big((z-\zeta_b^*)(z - \zeta_a^*)\big)
\end{array}
\right]
\end{split}
\label{resolvent-relation}
\end{equation}

\begin{equation}
    \begin{cases}
    \zeta^*_a = z\frac{Z(z)}{\mathcal{M}_{\mu_Y} \big( \frac{1}{z^2} \big) +1} \\
    \zeta^*_b  = \alpha z\frac{Z(z)}{\alpha \mathcal{M}_{\mu_Y}  \big( \frac{1}{z^2} \big) +1}
    \end{cases}
    \hspace{20 pt}
    Z(z) = \mathcal{C}^{(\alpha)}_{\mu_Z}\bigg(\frac{1}{z^2} T^{(\alpha)} \Big( \mathcal{M}_{\mu_Y}  \big( \frac{1}{z^2} \big) \Big)\bigg)
    \label{zeta_sol}
\end{equation}
\begin{align}
        \hspace{-1cm}\xi_i^* &= \frac{1}{\pi \bar{\mu}_{Y}(\gamma_i)} {\rm Im} \,\Bigg[  \gamma_i \mathcal{G}_{\bar{\mu}_Y}(\gamma_i - \ci 0^+) - \mathcal{C}^{(\alpha)}_{\mu_Z}\bigg(\frac{1}{\gamma_i}  \mathcal{G}_{\bar{\mu}_Y}(\gamma_i - \ci 0^+) \Big(1 - \alpha + \alpha \gamma_i  \mathcal{G}_{\bar{\mu}_Y}(\gamma_i - \ci 0^+) \Big) \bigg) \Bigg] \label{optimal-sv-final} \\
        &= \gamma_i - \frac{1}{\pi \bar{\mu}_{Y}(\gamma_i)} {\rm Im} \, \mathcal{C}^{(\alpha)}_{\mu_Z}\bigg( \frac{1- \alpha}{\gamma_i} \pi \sH [\bar{\mu}_{Y}](\gamma_i) + \alpha \big(\pi \sH [\bar{\mu}_{Y}](\gamma_i)\big)^2  - \alpha  \big( \pi \bar{\mu}_{Y}(\gamma_i)\big)^2 + \ci \pi \bar{\mu}_{Y}(\gamma_i) \big(\frac{1-\alpha}{\gamma_i} + 2 \alpha \pi \sH [\bar{\mu}_{Y}](\gamma_i) \big) \bigg) \notag
\end{align}
\vspace{-3mm}
\medskip
\hrule
\end{minipage}
\vspace{-5mm}
\end{figure*}
As a sanity check, by considering the normalized trace of the first block on both sides of \eqref{resolvent-relation}, one can recover the free rectangular addition formula $\mathcal{C}^{(\alpha)}_{\mu_S}(u) + \mathcal{C}^{(\alpha)}_{\mu_Z}(u) =  \mathcal{C}^{(\alpha)}_{\mu_Y}(u) $ for $u = \frac{1}{z^2} T^{(\alpha)} \Big( \mathcal{M}_{\mu_Y}  \big( \frac{1}{z^2} \big) \Big)$ (see Appendix \ref{free-add-conv}).

\subsection{Overlap and Optimal Singular Values}
From the lower left block of \eqref{resolvent-relation}, we get:
\begin{equation*}
\begin{split}
    \br_j^T \, \bG_{\mathcal{Y}}(z) \, \bl_j &= {\bs_j^{(r)}}^T \bG_{S^T S}\big((z-\zeta_b^*)(z - \zeta_a^*)\big) \bS^T \bs_j^{(l)} \\
&= \frac{\sigma_j}{(z-\zeta_b^*)(z - \zeta_a^*) - {\sigma_j}^2}
\end{split}
\end{equation*}
and using \eqref{resolvent-overlap}, we find:
\begin{equation}
    \hspace{-5mm}O(\gamma, \sigma) = \frac{1}{\pi \bar{\mu}_{Y}(\gamma)}  \lim_{z \to \gamma - \ci 0^+} {\rm Im}\, \frac{\sigma}{(z-\zeta_b^*)(z - \zeta_a^*) - {\sigma}^2}
    \label{Overlap-eq}
\end{equation}
where $\sigma$ is in the support of the limiting singular value distribution of $\bS$, $\mu_{S}$. In Fig. \ref{fig:Overlap} we illustrate on an example that theoretical predictions \eqref{Overlap-eq} are in good agreement with numerical simulations.
\begin{SCfigure}
  \centering
   \input{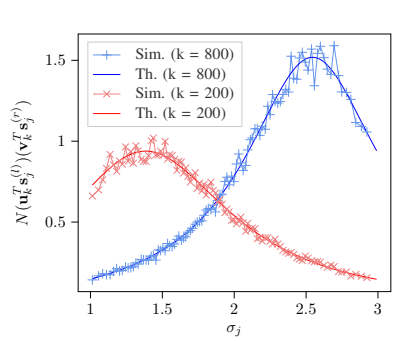}
   \caption{\small Computation of the rescaled overlap. Both $\bS$ and $\bZ$ are $N \times M$ matrices with i.i.d. Gaussian entries of variance $1/N$, and $N/M = 1/4$. The simulation results are average of 1000 experiments with fixed $\bS$, and $N = 1000, M =4000$. Some of the simulation points are dropped for clarity.}
   \label{fig:Overlap}
\end{SCfigure}

The optimal estimator for singular values reads:
\vspace{-1mm}
\begin{align}\label{optxi}
    \xi_i^* &= \frac{1}{N} \sum_{j=1}^N \sigma_j O(\gamma_i, \sigma_j) \approx \int t O(\gamma_i,t) \mu_{S}(t) \, dt \\
    &\hspace{-11pt}\approx  \frac{1}{\pi \bar{\mu}_{Y}(\gamma_i)} \lim_{z \to \gamma_i - \ci 0^+}  {\rm Im}\,\int \frac{t^2}{(z-\zeta_b^*)(z - \zeta_a^*) - t^2}  \mu_{S}(t) \, dt \notag \\
    &\hspace{-11pt}\approx  \frac{1}{N \pi \bar{\mu}_{Y}(\lambda_i)} \lim_{z \to \gamma_i - \ci 0^+}  {\rm Im} \Tr \bS \bG_{S^TS}\big((z-\zeta_b^*)(z - \zeta_a^*)\big)\bS^T \notag
\end{align}
Comparing the left upper blocks in the first and second lines of \eqref{resolvent-relation} we find
\begin{equation}
\begin{split}
    \bS \bG_{S^TS} &\big((z-\zeta_b^*)(z - \zeta_a^*)\big)\bS^T \\
    &= \Big\langle - \frac{\zeta_a^*}{z} \bI_N + \big( 1 - \frac{\zeta_a^*}{z} \big) \bY \bG_{Y^TY}(z^2)\bY^T \Big \rangle
\end{split}
\label{optimal-sv-eq1}
\end{equation}
Trace of the rhs of \eqref{optimal-sv-eq1} is (with multiplication by $1/N$)
\begin{equation*}
    \begin{split}
\frac{1}{N} \sum_{k=1}^N &\Big[\frac{{\gamma_k}^2}{z^2 - {\gamma_k}^2 }\big( 1 - \frac{\zeta_a^*}{z} \big) - \frac{\zeta_a^*}{z} \Big] \\
        &=- \frac{\zeta_a^*}{z} \frac{1}{N} \sum_{k=1}^N \big[\frac{{\gamma_k}^2}{z^2 - {\gamma_k}^2 }+1 \big] +  \frac{1}{N} \sum_{k=1}^N \frac{{\gamma_k}^2}{z^2 - {\gamma_k}^2 } \\
        &\approx - \zeta_a^* z \mathcal{G}_{\rho_Y}(z^2) +  \mathcal{M}_{\mu_Y}  \big( \frac{1}{z^2} \big) \\
        &= - \zeta_a^* z \mathcal{G}_{\rho_Y}(z^2) +  z^2 \mathcal{G}_{\rho_Y}(z^2) - 1
    \end{split}
\end{equation*}
The rhs can be expressed in terms of the symmetrized limiting spectral distribution of $\bY$. Indeed if we denote the Stieltjes of $\bar{\mu}_{Y}$ by $\mathcal{G}_{\bar{\mu}_{Y}}(z)$, using the relation $z \mathcal{G}_{\rho_Y}(z^2) = \mathcal{G}_{\bar{\mu}_Y}(z)$, the above trace implies with \eqref{optimal-sv-eq1}:
\begin{equation*}
    \frac{1}{N} \Tr \bS \bG_{S^TS}\big((z-\zeta_b^*)(z - \zeta_a^*)\big)\bS = - \zeta_a^* \mathcal{G}_{\bar{\mu}_Y}(z) +  z \mathcal{G}_{\bar{\mu}_Y}(z) - 1
\end{equation*}
Moreover $\zeta_a^*$ in \eqref{zeta_sol} can be written as,
\begin{equation}
    \begin{split}
        \zeta_a^* = \frac{1}{\mathcal{G}_{\bar{\mu}_Y}(z)} \mathcal{C}^{(\alpha)}_{\mu_Z}\bigg(\frac{1}{z}  \mathcal{G}_{\bar{\mu}_Y}(z) \Big(1 - \alpha + \alpha z \mathcal{G}_{\bar{\mu}_Y}(z) \Big) \bigg)
    \end{split}
\end{equation}
Replacing these results in \eqref{optxi} we easily deduce \eqref{optimal-sv-final} for the optimal singular values of the RIE.



\section{Computation of MMSE}\label{MMSE-computation}
First, we show that the posterior mean estimator is a RIE. 
\begin{equation}
   \bE [ \bS | \bY ]= \frac{1}{Z(\bY)} \int d\bX \, P_{S}(\bX) \bX P_Z \big( \bY - \sqrt{\lambda}\bX \big)
\end{equation}
with $Z(\bY)$ the normalizing constant. By the same calculations as below, one can see that $Z(\bY) = Z (\bU \bY \bV^T)$.

By rotation invariance of $P_{S}(\bX)$ under any orthogonal transformation $\bX \to \bU \bX \bV^T$ with Jacobian $\vert{\rm det} \bU\vert = \vert{\rm det} \bV \vert = 1$ we have 
\begin{align}
        &\bE [ \bS | \bU \bY \bV^T ] \hspace{-2pt}=\hspace{-2pt} \frac{1}{Z(\bY)} \hspace{-2pt} \int d\bX  P_{S}(\bX) \bX P_Z \big( \bU \bY \bV^T \hspace{-1.6pt}- \sqrt{\lambda}\bX \big) \notag \\
        &= \frac{1}{Z(\bY)} \int d\bX \, P_{S}(\bX) \bU \bX \bV^T P_Z \big( \bU \bY \bV^T \hspace{-1.6pt} - \sqrt{\lambda}\bU \bX \bV^T \big) \notag  \\
        &= \bU  \Big\{ \frac{1}{Z(\bY)}\int d\bX \,P_{S}(\bX) \bX P_Z \big( \bY - \sqrt{\lambda}\bX \big) \Big\} \bV^T \notag  \\
        &= \bU \bE [ \bS | \bY ] \bV^T
\end{align}
On the one hand, the posterior mean estimator achieves the MMSE {\it and} is RIE. On the other hand, the estimator in \eqref{rect-RIE} is conjectured to have the minimum MSE among the RIE class. Therefore, the MSE of the estimator \eqref{rect-RIE} equals the MMSE.

From \eqref{MSE-expansion}, \eqref{optimal-sv}, the MSE of the RIE in \eqref{rect-RIE} is $\frac{1}{N} \sum_{i =1}^N \sigma_i^2 - \frac{1}{N} \sum_{i =1}^N {\xi^*_i}^2$. Assuming that the MMSE concentrates in the limit $N \to \infty$, we obtain \eqref{MMSE-eq}.

\subsection{Gaussian Noise}
From \eqref{MMSE-eq}, \eqref{G-RIE}, to compute the MMSE, we need to compute the following expectation:
\begin{equation*}
    \int \bigg( x - \frac{1-\alpha}{\alpha} \frac{1}{x} - 2 \pi \sH [\bar{\mu}_{Y}](x) \bigg)^2 \, \mu_Y(x) \, dx
\end{equation*}
In Appendix \ref{com-G-MMSE}, using properties of the Hilbert transform we show this integral equals:
\begin{equation}
    \int x^2 \mu_Y(x) \, dx + \big(\frac{1}{\alpha} -1 \big)^2 \int \frac{\mu_Y(x)}{x^2}  dx + \frac{\pi^2}{3} \int {\mu_Y(x)}^3 \, dx - \frac{2}{\alpha}
    \label{expect-xi*}
\end{equation}
By independence of $\bS$ and $\bZ$, the second moment of $\mu_Y(x)$ can be expressed as $\int x^2 \mu_Y(x) \, dx =  \lambda \int x^2 \mu_S(x) \, dx + \frac{1}{\alpha}$.
Putting these relations together, we deduce (for Gaussian noise):
\begin{equation*}
\begin{split}
    \int {\xi^*(x)}^2 \mu_Y(x) \, dx &= \int x^2 \mu_S(x) \, dx - \frac{1}{\lambda} \Big[ \frac{1}{\alpha} \\
    &\hspace{-30pt}- \big(\frac{1}{\alpha} -1 \big)^2 \int \frac{\mu_Y(x)}{x^2}\, dx 
        - \frac{\pi^2}{3} \int {\mu_Y(x)}^3 \, dx \Big]
\end{split}
\end{equation*}
Replacing this identity in \eqref{MMSE-eq}, we get \eqref{G-MMSE-eq}.
\section{Numerical Results}\label{numericals}
We consider two cases of signal priors under two different noise distributions. We look at Gaussian and sparse signals and Gaussian noise ($\bZ$ has i.i.d. Gaussian entries with variance $\frac{1}{N}$ and $\mu_Z$ is the Marchenko-Pastur distribution) and the uniform noise with $\mu_Z$  uniform distribution on $[0,2]$, $\mathcal{U}_{[0,2]}$. For the uniform distribution with $\alpha=1$, the rectangular R-transform is $\mathcal{C}_{\mathcal{U}_{[0,2]}}^{(1)} (z) = 2 \sqrt{z} \coth \big(2\sqrt{z}\big) - 1$. To compute the RIE, $\bar{\mu}_Y(\gamma)$ and $\sH [\bar{\mu}_{Y}](\gamma)$ are computed from an estimation of the Stieltjes transform of $\bar{\mu}_Y$, which is obtained numerically using the Cauchy kernel (see section 19.5.2 in \cite{potters2020first}). Note that, the knowledge of $\mu_S$ is not needed in \eqref{rect-RIE}, and thus this approach for constructing the RIE provides an estimation \textit{algorithm}.

\subsection{Gausssian Signal}
Let $\bS$ a  matrix with i.i.d. Gaussian entries with variance $\frac{1}{N}$.
\subsubsection{Gaussian Noise}
In this case, for any $N,M$ each entry of $\bY$ can be viewed as an independent scalar AWGN channel. For this scalar channel, the MMSE equals $\frac{1}{N} \frac{1}{1 + \lambda}$ \cite{guo2005mutual}. Therefore, the (normalized) MMSE of the matrix problem is $\frac{M}{N} \frac{1}{1 + \lambda}\to \frac{1}{\alpha} \frac{1}{1 + \lambda}$ for $N \to \infty$. $\mu_Y$ is the Marchenko-Pastur (MP) law rescaled with $\sqrt{\lambda + 1}$ and the MMSE in \eqref{G-MMSE-eq} can be evaluated analytically (by \textit{Mathematica} \cite{Mathematica}) to be equal to $\frac{1}{\alpha} \frac{1}{1 + \lambda}$. 
In Fig. \ref{G-G}, MSE of RIE is compared to the theoretical MMSE for $\alpha = 1$. Note that, for this example, the Hilbert transform used in RIE is the exact Hilbert transform of MP law rescaled with $\sqrt{\lambda + 1}$.

\subsubsection{Uniform Noise}
In Fig. \ref{G-U}, we compare the MSE of the RIE \eqref{rect-RIE} with the MSE of the oracle estimator \eqref{optimal-sv}. We see that, except in the relatively low SNR, RIE has the same MSE as the oracle estimator. We believe that, the mismatch in the low-SNR regime is due to the inaccuracy of numerical approximation of the Stieltjes transform of $\bar{\mu}_Y$.
\begin{figure}
\centering
    \begin{subfigure}[b]{0.22\textwidth}
        \centering
        \input{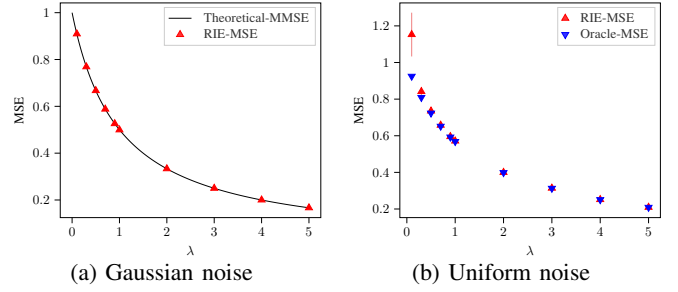}
        \vspace{-6mm}
        \caption{{\small Gaussian noise}}
        \label{G-G}
    \end{subfigure}
    \quad
    \begin{subfigure}[b]{0.22\textwidth}
        \centering
\begin{tikzpicture}[scale=0.5]

\definecolor{color0}{rgb}{0.941176470588235,0.501960784313725,0.501960784313725}
\definecolor{color1}{rgb}{0.67843137254902,0.847058823529412,0.901960784313726}

\begin{axis}[
legend cell align={left},
legend style={fill opacity=0.8, draw opacity=1, text opacity=1, draw=white!80!black},
tick align=outside,
tick pos=left,
x grid style={white!69.0196078431373!black},
xlabel={\(\displaystyle \lambda\)},
xmin=-0.145, xmax=5.245,
xtick style={color=black},
y grid style={white!69.0196078431373!black},
ylabel={MSE},
ymin=0.152989628270105, ymax=1.32539033414586,
ytick style={color=black}
]
\path [draw=color0, semithick]
(axis cs:0.1,1.03303033145219)
--(axis cs:0.1,1.27209939296969);

\path [draw=color0, semithick]
(axis cs:0.3,0.81987575197821)
--(axis cs:0.3,0.863033131115337);

\path [draw=color0, semithick]
(axis cs:0.5,0.726273947666152)
--(axis cs:0.5,0.743930384552147);

\path [draw=color0, semithick]
(axis cs:0.7,0.648403717688098)
--(axis cs:0.7,0.666974801994573);

\path [draw=color0, semithick]
(axis cs:0.9,0.5901955957818)
--(axis cs:0.9,0.600516926620797);

\path [draw=color0, semithick]
(axis cs:1,0.565385729356802)
--(axis cs:1,0.577083622597401);

\path [draw=color0, semithick]
(axis cs:2,0.393250011156362)
--(axis cs:2,0.406413307591626);

\path [draw=color0, semithick]
(axis cs:3,0.308871790800908)
--(axis cs:3,0.318785866515836);

\path [draw=color0, semithick]
(axis cs:4,0.247606338085179)
--(axis cs:4,0.255309250887774);

\path [draw=color0, semithick]
(axis cs:5,0.206662737482111)
--(axis cs:5,0.212248083249741);

\path [draw=color1, semithick]
(axis cs:0.1,0.923737564407104)
--(axis cs:0.1,0.925920607197669);

\path [draw=color1, semithick]
(axis cs:0.3,0.804393241100484)
--(axis cs:0.3,0.81386038120777);

\path [draw=color1, semithick]
(axis cs:0.5,0.720281014015007)
--(axis cs:0.5,0.726999508933095);

\path [draw=color1, semithick]
(axis cs:0.7,0.645750561762231)
--(axis cs:0.7,0.657776341270898);

\path [draw=color1, semithick]
(axis cs:0.9,0.588136725710247)
--(axis cs:0.9,0.597341366641143);

\path [draw=color1, semithick]
(axis cs:1,0.563689801272909)
--(axis cs:1,0.573432903917115);

\path [draw=color1, semithick]
(axis cs:2,0.392454165647133)
--(axis cs:2,0.404943625182792);

\path [draw=color1, semithick]
(axis cs:3,0.308128156083209)
--(axis cs:3,0.317630266805713);

\path [draw=color1, semithick]
(axis cs:4,0.247104440654114)
--(axis cs:4,0.254713902511564);

\path [draw=color1, semithick]
(axis cs:5,0.206280569446275)
--(axis cs:5,0.211848418333924);

\addplot [semithick, red, mark=triangle*, mark size=3, mark options={solid}, only marks]
table {%
0.1 1.15256486221094
0.3 0.841454441546774
0.5 0.735102166109149
0.7 0.657689259841335
0.9 0.595356261201299
1 0.571234675977102
2 0.399831659373994
3 0.313828828658372
4 0.251457794486477
5 0.209455410365926
};
\addlegendentry{RIE-MSE}
\addplot [semithick, blue, mark=triangle*, mark size=3, mark options={solid,rotate=180}, only marks]
table {%
0.1 0.924829085802387
0.3 0.809126811154127
0.5 0.723640261474051
0.7 0.651763451516565
0.9 0.592739046175695
1 0.568561352595012
2 0.398698895414963
3 0.312879211444461
4 0.250909171582839
5 0.2090644938901
};
\addlegendentry{Oracle-MSE}
\end{axis}

\end{tikzpicture}
        \vspace{-6mm}
        \caption{{\small Uniform noise}}
        \label{G-U}
    \end{subfigure}
    \vspace{-1mm}
    \caption{{\small Gaussian Signal with $\alpha =1$. The RIE is applied to $N=1000, M =1000$, and the results are averaged over 10 runs (error bars might be invisible).}}
    \label{fig:G-MSE}
    \vspace{-3mm}
\end{figure}

\subsection{Sparse Signal}
In this case $\mu_S = p \delta_0 + (1-p)\delta_{+1}$ for $0 \leq p \leq 1$. In Fig. \ref{fig:S-MSE} MSE of RIE is compared to the MSE of oracle estimator \eqref{optimal-sv} with $\alpha = 1$ for $p=0.2,0.9$, for Gaussian noise and uniform noise. For the Gaussian noise setting, in the high-sparsity regime $p=0.9$, the MSE is close to the rank-one MMSE computed in \cite{miolane2017fundamental}.

\begin{figure}
\centering
    \begin{subfigure}[b]{0.22\textwidth}
        \centering
        \input{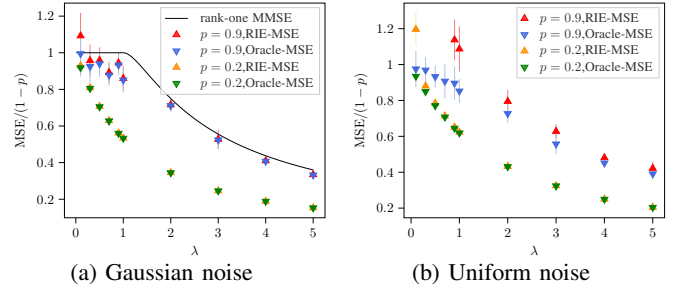}
        \vspace{-6mm}
        \caption{{\small Gaussian noise}}
        \label{S-G}
    \end{subfigure}
    \quad
    \begin{subfigure}[b]{0.22\textwidth}
        \centering
\begin{tikzpicture}[scale=0.5]

\definecolor{color0}{rgb}{0.941176470588235,0.501960784313725,0.501960784313725}
\definecolor{color1}{rgb}{0.690196078431373,0.768627450980392,0.870588235294118}
\definecolor{color2}{rgb}{1,0.894117647058824,0.709803921568627}
\definecolor{color3}{rgb}{0.564705882352941,0.933333333333333,0.564705882352941}
\definecolor{color4}{rgb}{0.254901960784314,0.411764705882353,0.882352941176471}
\definecolor{color5}{rgb}{1,0.549019607843137,0}

\begin{axis}[
legend cell align={left},
legend style={fill opacity=0.8, draw opacity=1, text opacity=1, draw=white!80!black},
tick align=outside,
tick pos=left,
x grid style={white!69.0196078431373!black},
xlabel={\(\displaystyle \lambda\)},
xmin=-0.145, xmax=5.245,
xtick style={color=black},
y grid style={white!69.0196078431373!black},
ylabel={${\rm MSE}/(1-p)$},
ymin=0.146236429442043, ymax=1.34251990027516,
ytick style={color=black}
]
\path [draw=color0, semithick]
(axis cs:0.9,1.02387726616109)
--(axis cs:0.9,1.25140669020073);

\path [draw=color0, semithick]
(axis cs:1,0.960974133336603)
--(axis cs:1,1.21251428319012);

\path [draw=color0, semithick]
(axis cs:2,0.732456990293317)
--(axis cs:2,0.857534529874897);

\path [draw=color0, semithick]
(axis cs:3,0.588609497965468)
--(axis cs:3,0.667310433542997);

\path [draw=color0, semithick]
(axis cs:4,0.45826747883989)
--(axis cs:4,0.503866889837449);

\path [draw=color0, semithick]
(axis cs:5,0.384802253093513)
--(axis cs:5,0.459026716703951);

\path [draw=color1, semithick]
(axis cs:0.1,0.874052944415051)
--(axis cs:0.1,1.07702872171826);

\path [draw=color1, semithick]
(axis cs:0.3,0.897480081036266)
--(axis cs:0.3,1.04067888883746);

\path [draw=color1, semithick]
(axis cs:0.5,0.870567693813406)
--(axis cs:0.5,0.9943377177662);

\path [draw=color1, semithick]
(axis cs:0.7,0.809696788023236)
--(axis cs:0.7,1.00460296459583);

\path [draw=color1, semithick]
(axis cs:0.9,0.802630087830347)
--(axis cs:0.9,0.989532774802562);

\path [draw=color1, semithick]
(axis cs:1,0.786429996679347)
--(axis cs:1,0.917975874709159);

\path [draw=color1, semithick]
(axis cs:2,0.676906750543878)
--(axis cs:2,0.776366944653252);

\path [draw=color1, semithick]
(axis cs:3,0.502417740807798)
--(axis cs:3,0.61245667573098);

\path [draw=color1, semithick]
(axis cs:4,0.425946508365075)
--(axis cs:4,0.476251148415853);

\path [draw=color1, semithick]
(axis cs:5,0.35953473558791)
--(axis cs:5,0.421759773293055);

\path [draw=color2, semithick]
(axis cs:0.1,1.10445388884857)
--(axis cs:0.1,1.28814337887366);

\path [draw=color2, semithick]
(axis cs:0.3,0.866194659960202)
--(axis cs:0.3,0.895237117064461);

\path [draw=color2, semithick]
(axis cs:0.5,0.770119230773777)
--(axis cs:0.5,0.798808841862118);

\path [draw=color2, semithick]
(axis cs:0.7,0.702002745495262)
--(axis cs:0.7,0.725221596605315);

\path [draw=color2, semithick]
(axis cs:0.9,0.63812408649737)
--(axis cs:0.9,0.664819098028423);

\path [draw=color2, semithick]
(axis cs:1,0.610853221177445)
--(axis cs:1,0.635562413321669);

\path [draw=color2, semithick]
(axis cs:2,0.422778606210807)
--(axis cs:2,0.442490538452124);

\path [draw=color2, semithick]
(axis cs:3,0.315758266323445)
--(axis cs:3,0.334188412729614);

\path [draw=color2, semithick]
(axis cs:4,0.245134080687788)
--(axis cs:4,0.254582673293705);

\path [draw=color2, semithick]
(axis cs:5,0.201054373013121)
--(axis cs:5,0.207720573938528);

\path [draw=color3, semithick]
(axis cs:0.1,0.923628106001416)
--(axis cs:0.1,0.94550405281281);

\path [draw=color3, semithick]
(axis cs:0.3,0.837135483381572)
--(axis cs:0.3,0.861543477223333);

\path [draw=color3, semithick]
(axis cs:0.5,0.758727490654344)
--(axis cs:0.5,0.782955519684444);

\path [draw=color3, semithick]
(axis cs:0.7,0.697962128197737)
--(axis cs:0.7,0.716429578583355);

\path [draw=color3, semithick]
(axis cs:0.9,0.632864347065319)
--(axis cs:0.9,0.657006167141108);

\path [draw=color3, semithick]
(axis cs:1,0.60812725719676)
--(axis cs:1,0.629556168085208);

\path [draw=color3, semithick]
(axis cs:2,0.421430755540588)
--(axis cs:2,0.440418274911856);

\path [draw=color3, semithick]
(axis cs:3,0.314959614876798)
--(axis cs:3,0.332723455784719);

\path [draw=color3, semithick]
(axis cs:4,0.244562182629203)
--(axis cs:4,0.253811752811542);

\path [draw=color3, semithick]
(axis cs:5,0.200612950843549)
--(axis cs:5,0.207113648184605);

\addplot [semithick, red, mark=triangle*, mark size=3, mark options={solid}, only marks]
table {%
0.9 1.13764197818091
1 1.08674420826336
2 0.794995760084107
3 0.627959965754233
4 0.481067184338669
5 0.421914484898732
};
\addlegendentry{$p=0.9$,RIE-MSE}
\addplot [semithick, color4, mark=triangle*, mark size=3, mark options={solid,rotate=180}, only marks]
table {%
0.1 0.975540833066656
0.3 0.969079484936865
0.5 0.932452705789803
0.7 0.907149876309535
0.9 0.896081431316454
1 0.852202935694253
2 0.726636847598565
3 0.557437208269389
4 0.451098828390464
5 0.390647254440482
};
\addlegendentry{$p=0.9$,Oracle-MSE}
\addplot [semithick, color5, mark=triangle*, mark size=3, mark options={solid}, only marks]
table {%
0.1 1.19629863386111
0.3 0.880715888512331
0.5 0.784464036317947
0.7 0.713612171050288
0.9 0.651471592262896
1 0.623207817249557
2 0.432634572331465
3 0.32497333952653
4 0.249858376990746
5 0.204387473475824
};
\addlegendentry{$p=0.2$,RIE-MSE}
\addplot [semithick, green!50.1960784313725!black, mark=triangle*, mark size=3, mark options={solid,rotate=180}, only marks]
table {%
0.1 0.934566079407113
0.3 0.849339480302453
0.5 0.770841505169394
0.7 0.707195853390546
0.9 0.644935257103213
1 0.618841712640984
2 0.430924515226222
3 0.323841535330759
4 0.249186967720373
5 0.203863299514077
};
\addlegendentry{$p=0.2$,Oracle-MSE}
\end{axis}

\end{tikzpicture}
        \vspace{-6mm}
        \caption{{\small Uniform noise}}
        \label{S-U}
    \end{subfigure}
    \vspace{-1mm}
    \caption{{\small Sparse Signal with $\alpha =1$. MSE is normalized by the norm of the signal, $1-p$. The RIE is applied to $N=1000, M =1000$, and the results are averaged over 10 runs (error bars might be invisible).}}
    \label{fig:S-MSE}
    \vspace{-5.5mm}
\end{figure}


\section*{Acknowledgment}
The work of F. P has been supported by the SNSF grant no 200021-204119.
\clearpage
\bibliographystyle{IEEEtran}
\bibliography{References}

\clearpage
\appendices

\section{Derivation of the resolvent relation }\label{RIE-der}
From \eqref{RIE-observ-model}, we have
\begin{equation}
\begin{split}
    \cY &=  \left[
\begin{array}{cc}
\mathbf{0} & \bS \\
\bS^T & \mathbf{0}
\end{array}
\right] + \left[
\begin{array}{cc}
\bU & \mathbf{0}  \\
\mathbf{0} & \bV
\end{array}
\right] \left[
\begin{array}{cc}
\mathbf{0} & \bZ \\
\bZ^T & \mathbf{0}
\end{array}
\right] \left[
\begin{array}{cc}
\bU^T & \mathbf{0}  \\
\mathbf{0} & \bV^T
\end{array}
\right] \\
&= \cS + \bO \cZ \bO^T
\end{split}
\label{RIE-1}
\end{equation}

Let $\bG(z) \equiv \bG_{\mathcal{Y}}(z)$. We can express entries of $\bG(z)$ in terms of a Gaussian integral. In large-$N$, we expect the resolvent can be studied through its ensemble average over matrices $\bU$, $\bV$. Using the replica trick we have: (for more details we refer the reader to Appendix B-A in \cite{bun2016rotational})
\begin{equation*}
    \begin{split}
        &\langle  \bG_{ij}(z) \rangle = \lim_{n \to 0}  \int \bigg( \prod_{k=1}^{M+N} \prod_{\tau =1}^n d \eta_k^{\tau}  \bigg) \, \eta^1_i \eta^1_j \\
        & \times \bigg\langle \exp \Big\{ -\frac{1}{2} \sum_{\tau=1}^n \sum_{k,l=1}^{M+N} \eta_k^{\tau} \big( z \delta_{kl} - \cY_{kl} \big) \eta_l^{\tau} \Big\}  \bigg\rangle
    \end{split}
\end{equation*}
For the expression in the exponent, we have:
\begin{equation}
    \begin{split}
        &\hspace{-40pt}-\frac{1}{2} \sum_{\tau=1}^n \sum_{k,l=1}^{M+N} \eta_k^{\tau} \big( z \delta_{kl} - \cY_{kl} \big) \eta_l^{\tau}  \\
        &=-\frac{1}{2} \sum_{\tau=1}^n \sum_{k,l=1}^{M+N} \eta_k^{\tau} \big( z \delta_{kl} - \cS_{kl} \big) \eta_l^{\tau} \\
        &\hspace{20pt}+ \frac{1}{2} \sum_{\tau=1}^n \sum_{k,l=1}^{M+N} \eta_k^{\tau} (\bO \cZ \bO^T)_{kl} \eta_l^{\tau} 
    \end{split}
    \label{exp1}
\end{equation}
The first term in the RHS can be written as
\begin{equation}
    -\frac{1}{2} \sum_{\tau=1}^n {\bbeta^{\tau} }^T \big( z \bI_{N+M} - \cS \big) \bbeta^{\tau}
    \label{exp2}
\end{equation}

Given the structure \eqref{RIE-1} for $\bO \cZ \bO^T$, the second sum in \eqref{exp1} can be written as:
\begin{equation}
    \begin{split}
 \sum_{k=1}^N \sum_{l=N+1}^{M+N} &\eta_k^{\tau} \big(\bU \bZ \bV^T)_{k,l-N} \eta_l^{\tau}\\
 &+ \sum_{k=N+1}^{M+N} \sum_{l=1}^{N} \eta_k^{\tau} \big(\bV \bZ^T \bU^T)_{k-N,l} \eta_l^{\tau}
    \end{split}
    \label{exp3}
\end{equation}
Split each replica $\bbeta^{\tau}$ into two vectors $\ba^{\tau} \in \bR^N, \bb^{\tau} \in \bR^M$, $\bbeta^{\tau} = \left[
\begin{array}{c}
\ba^{\tau} \\
\bb^{\tau}
\end{array}
\right]$. The expression in \eqref{exp3} can be rewritten as $
 2 \Tr \bb^{\tau} {\ba^{\tau}}^T \bU \bZ \bV^T 
 $.
So,  we have:
\begin{equation}
    \begin{split}
        \langle  \bG_{ij}(z) \rangle &= \int \bigg( \prod_{k=1}^{M+N} \prod_{\tau =1}^n d \eta_k^{\tau}  \bigg) \, \eta^1_i \eta^1_j \\
        &\times  \exp \Big\{-\frac{1}{2} \sum_{\tau=1}^n {\bbeta^{\tau} }^T \big( z \bI_{N+M} - \cS \big) \bbeta^{\tau} \Big\}  \\
        &\times \bigg\langle \exp \Big\{ \sum_{\tau=1}^n \Tr \bb^{\tau} {\ba^{\tau}}^T \bU \bZ \bV^T  \Big\}  \bigg\rangle_{\bU,\bV}
    \end{split}
\end{equation}
The last term $\langle - \rangle_{\bU, \bV}$ is the definition of the rank-one rectangular spherical integral studied in \cite{benaych2011rectangular}.

For a rank-one matrix $\bm{E} \in \bR^{M \times N}$ with non-zero singular value $\theta$, and $\bM \in \bR^{N \times M}$ with limiting singular value distribution $\mu$, in the limit $N \to \infty$, we have
\begin{equation*}
    \begin{split}
        \frac{1}{N} \ln \bigg\langle \exp \Big\{ \sqrt{N M} \Tr \bm{E} \bU \bM \bV \Big\} \bigg\rangle &\approx \int_0^{\theta} \frac{\mathcal{C}_{\mu}^{(\alpha)}(t^2)}{t} \, dt \\
        &\hspace{-2cm}= \frac{1}{2} \int_0^{\theta^2} \frac{\mathcal{C}_{\mu}^{(\alpha)}(t)}{t} \, dt \equiv \frac{1}{2} \mathcal{W}_{\mu}(\theta^2)
    \end{split}
\end{equation*}
Generalizing this result to finite rank-n case we have:
\begin{equation*}
\begin{split}
    \bigg\langle & \exp \Big\{ \sum_{\tau=1}^n \Tr \bb^{\tau} {\ba^{\tau}}^T \bU \bZ \bV^T  \Big\}  \bigg\rangle_{\bU,\bV} \\
    &\approx \exp \Big\{ \frac{N}{2} \sum_{\tau=1}^n \mathcal{W}_{\mu_Z} \big(\frac{1}{N M} \| \ba^{\tau} \|^2 \| \bb^{\tau} \|^2 \big) \Big\}
\end{split}
\end{equation*}
where we used that for each replica, the singular value of $\bb^{\tau} {\ba^{\tau}}^T$ is $\| \ba^{\tau} \| \| \bb^{\tau} \|$. Although this generalization has not been studied before, but we believe that the same idea as in \cite{guionnet2022asymptotics} can be applied to show it holds.


Therefore, we find
\begin{equation}
    \begin{split}
        &\langle  \bG_{ij}(z) \rangle = \int \bigg( \prod_{k=1}^{M+N} \prod_{\tau =1}^n d \eta_k^{\tau}  \bigg) \, \eta^1_i \eta^1_j \\
        &\hspace{20pt} \times \exp \bigg\{ \sum_{\tau=1}^n \Big[  -\frac{1}{2}  {\bbeta^{\tau} }^T \big( z \bI_{N+M} - \cS \big) \bbeta^{\tau} \\
        &\hspace{30pt}+   \frac{N}{2} \mathcal{W}_{\mu_Z} \big(\frac{1}{N M} \| \ba^{\tau} \|^2 \| \bb^{\tau} \|^2 \big) \Big] \bigg\} 
    \end{split}
    \label{first-integral}
\end{equation}

Introducing delta functions $\delta \big(p_a^{\tau} - \frac{1}{N}\|\ba^{\tau}\|^2\big)$, $\delta \big(p_b^{\tau} - \frac{1}{M}\|\bb^{\tau}\|^2\big)$, and considering the Fourier transform $\delta \big(p_a^{\tau} - \frac{1}{N}\|\ba^{\tau}\|^2\big) \propto \int \, d \zeta_a^{\tau}  \exp  \Big\{ -\frac{N}{2} \zeta_a^{\tau} \big( p_a^{\tau} - \frac{1}{N}\|\ba^{\tau}\|^2 \big) \Big\}$, the integral in \eqref{first-integral} can be written as \eqref{eq15} (see \cite{bun2016rotational} for details).
\begin{figure*}[t]
\centering
\begin{minipage}{\textwidth}
\begin{equation}
    \begin{split}
        \langle  \bG_{ij}(z) \rangle &\propto  \int \int \bigg( \prod d p_a^{\tau} \, d p_b^{\tau} \,  d \zeta_a^{\tau} \,  d \zeta_b^{\tau}  \bigg) \exp \Big\{ \frac{N}{2} \sum_{\tau=1}^n  \big[  \mathcal{W}_{\mu_Z} ( p_a^{\tau} p_b^{\tau} ) - \zeta_a^{\tau} p_a^{\tau} - \frac{1}{\alpha} \zeta_b^{\tau} p_b^{\tau} \big] \Big\} \\
        &\times  \int \bigg( \prod_{k=1}^{M+N} \prod_{\tau =1}^n d \eta_k^{\tau}  \bigg) \, \eta^1_i \eta^1_j  \exp \bigg\{ \sum_{\tau=1}^n \Big[  -\frac{1}{2}  {\bbeta^{\tau} }^T \big( z \bI  - \cS \big) \bbeta^{\tau} + \frac{1}{2} \zeta_a^{\tau} \|\ba^{\tau}\|^2 + \frac{1}{2} \zeta_b^{\tau} \|\bb^{\tau}\|^2   \Big] \bigg\} 
    \end{split}
    \label{eq15}
\end{equation}
\begin{equation}
    {\bM^1}^{-1} = \left[
\begin{array}{cc}
(z-\zeta_a^1)^{-1} \bI_N + (z-\zeta_a^1)^{-1} \bS \bG_{S^T S} \big((z-\zeta_b^1)(z - \zeta_a^1)\big) \bS^T &  \bS \bG_{S^T S} \big((z-\zeta_b^1)(z - \zeta_a^1)\big)  \\
\bG_{S^T S}\big((z-\zeta_b^1)(z - \zeta_a^1)\big) \bS^T & (z - \zeta_a^1) \bG_{S^T S} \big((z-\zeta_b^1)(z - \zeta_a^1)\big)
\end{array}
\right]
\label{M-inverse}
\end{equation}
\begin{equation}
    \begin{split}
        \langle  \bG_{ij}(z) \rangle &\propto  \int \int \big({\bM^1}^{-1}\big)_{ij}   \exp \bigg\{ \frac{N}{2} \sum_{\tau=1}^n  \big[  \mathcal{W}_{\mu_Z} ( p_a^{\tau} p_b^{\tau} ) - \zeta_a^{\tau} p_a^{\tau} - \frac{1}{\alpha} \zeta_b^{\tau} p_b^{\tau} \big] \\
        &\hspace{0.5cm} - \sum_{\tau=1}^n \frac{1}{2} \Big[ (M-N) \ln (z-\zeta_b^{\tau}) + \sum_{k=1}^N \ln \big\{ (z-\zeta_a^{\tau})(z-\zeta_b^{\tau}) - {\sigma_k}^2 \big\} \Big] \bigg\} \bigg( \prod d p_a^{\tau} \, d p_b^{\tau} \,  d \zeta_a^{\tau} \,  d \zeta_b^{\tau}  \bigg)
    \end{split}
    \label{eq19}
\end{equation}
\vspace{-2mm}
\medskip
\hrule
\end{minipage}
\vspace{-5mm}
\end{figure*}
The second integral in \eqref{eq15} is a Gaussian integral with matrix
\begin{equation}
    \bM^{\tau} = \left[
\begin{array}{cc}
(z-\zeta_a^{\tau}) \bI_N & -\bS  \\
-\bS^T & (z-\zeta_b^{\tau})  \bI_M 
\end{array}
\right]
\label{matrix-M}
\end{equation}
Using the formula for determinant of block matrices, we have 
\begin{equation*}
\begin{split}
    \det \bM^{\tau} &= \det \big[ (z-\zeta_a^{\tau})\bI_N - (z-\zeta_b^{\tau})^{-1} \bS \bS^T \big] \det \big[ (z-\zeta_b^{\tau}) \bI_M \big] \\
    &= (z-\zeta_b^{\tau})^{M-N} \prod_{k=1}^N \big[ (z-\zeta_a^{\tau})(z-\zeta_b^{\tau}) - {\sigma_k}^2 \big]
\end{split}
\end{equation*}

Except for the first replica, the Gaussian integral is (up to constants):
\begin{equation*}
    \exp \bigg\{ -\frac{1}{2} \Big[ (M-N) \ln (z-\zeta_b^{\tau}) + \sum_{k=1}^N \ln \big\{ (z-\zeta_a^{\tau})(z-\zeta_b^{\tau}) - {\sigma_k}^2 \big\} \Big] \bigg\}
\end{equation*}
And, the integral for the first replica is the above expression multiplied by $\big({\bM^1}^{-1}\big)_{ij}$.

By Proposition 2.8.7 \cite{bernstein2009matrix}), ${\bM^1}^{-1}$ can be written as \eqref{M-inverse}, with $\bG_{S^T S}$ the resolvent of the matrix $\bS^T \bS$. 

Putting all together, the integral in \eqref{eq15}, can be written as \eqref{eq19}. This integral can be evaluated using saddle-point method. By replica symmetric ansatz, at saddle-point we have that
\begin{equation*}
    p_a^{\tau} = p_a, \hspace{5pt}p_b^{\tau} = p_b, \hspace{5pt}\zeta_a^{\tau} = \zeta_a, \hspace{5pt}\zeta_b^{\tau} = \zeta_b
\end{equation*}

The extremum of the function in the exponent is at:
\begin{equation}
    \begin{cases}
    p_a^* = (z - \zeta^*_b) \mathcal{G}_{\rho_S}\big( (z-\zeta^*_a)(z-\zeta^*_b) \big) \\
    p_b^* = (1-\alpha ) \frac{1}{z - \zeta^*_b}+\alpha(z - \zeta^*_a) \mathcal{G}_{\rho_S}\big( (z-\zeta^*_a)(z-\zeta^*_b) \big)\\
    \zeta^*_a = \frac{\mathcal{C}^{(\alpha)}_{\mu_Z}( p_a^* p_b^* )} {p_a^*} \\
    \zeta^*_b = \alpha \frac{\mathcal{C}^{(\alpha)}_{\mu_Z}( p_a^* p_b^* )} {p_b^*}
    \end{cases}
\end{equation}
where $\mathcal{G}_{\rho_S}$ is the Stieltjes transform of the matrix $\bS \bS^T$, whose limiting eigenvalue distribution is the squared transform of the limiting singular value distribution of $\bS$.

To simplify the solution, we compute the normalized trace of both sides in \eqref{eq19}. First we compute the trace of the matrix $\bM^{-1}$ in \eqref{M-inverse} plugging $\zeta^*_a, \zeta^*_b$. Trace of the first block is:
\begin{equation}
    \begin{split}
 \frac{1}{N}& \frac{1}{z-\zeta_a^*} \sum_{k=1}^{N} \Big[ 1 +  \frac{{\sigma_k}^2}{(z-\zeta_b^*)(z - \zeta_a^*) - {\sigma_k}^2} \Big] \\
        &=\frac{1}{N} (z-\zeta_b^*) \sum_{k=1}^{N}  \frac{1}{(z-\zeta_b^*)(z - \zeta_a^*) - {\sigma_k}^2} \\
        &\approx (z-\zeta_b^*) \mathcal{G}_{\rho_S}\big( (z-\zeta_b^*)(z - \zeta_a^*) \big) \\
        &= p_a^*
    \end{split}
    \label{trof1stblock}
\end{equation}
Similarly, the trace of the last block can be computed to be $p^*_b$.

The matrix in the lhs is $\bG_{\mathcal{Y}}(z)$, which is 
\begin{equation}
    \begin{split}
        \bG_{\mathcal{Y}}(z) &= \big(z \bI - \cY \big)^{-1} \\
&\hspace{-10pt}= \left[
\begin{array}{cc}
z^{-1} \bI_N + z^{-1} \bY \bG_{Y^TY}(z^2) \bY^T & \bY \bG_{Y^TY}(z^2) \\
\bG_{Y^TY}(z^2) \bY^T & z \bG_{Y^TY}(z^2)
\end{array}
\right]
    \end{split}
    \label{resolvent_Y}
\end{equation}
Trace of the first block is:
\begin{equation}
\begin{split}
\frac{1}{N} \frac{1}{z} \sum_{k=1}^N \big[ 1 + \frac{{\gamma_k}^2}{z^2 - {\gamma_k}^2} \big] &= \frac{1}{N} z \sum_{k=1}^N \frac{1}{z^2 - {\gamma_k}^2} \\
    &\approx z \mathcal{G}_{\rho_Y}(z^2)
\end{split}
\label{trace of first LHS}
\end{equation}
Therefore, from \eqref{trof1stblock}, we find $p_a^* = z \mathcal{G}_{\rho_Y}(z^2)$.

Trace of the last block can be evaluated to be
$\alpha z \mathcal{G}_{\rho_Y}(z^2) + (1- \alpha ) \frac{1}{z}$. So, $p_b^* = \alpha z \mathcal{G}_{\rho_Y}(z^2) + (1- \alpha ) \frac{1}{z}$.






Thus, we find
\begin{equation}
    \begin{cases}
    p_a^* = z \mathcal{G}_{\rho_Y}(z^2) = \frac{1}{z} \mathcal{M}_{\mu_Y} \big( \frac{1}{z^2} \big) + \frac{1}{z} \\
    p_b^* = \alpha z \mathcal{G}_{\rho_Y}(z^2) + (1- \alpha ) \frac{1}{z}= \alpha \frac{1}{z} \mathcal{M}_{\mu_Y} \big( \frac{1}{z^2} \big) + \frac{1}{z}
    \end{cases}
\end{equation}
\begin{equation*}
    p_a^* p_b^* = \frac{1}{z^2} T^{(\alpha)} \Big( \mathcal{M}_{\mu_Y} \big( \frac{1}{z^2} \big) \Big)
\end{equation*}
which implies
\begin{equation}
    \begin{cases}
    \zeta^*_a = z\frac{\mathcal{C}^{(\alpha)}_{\mu_Z}\bigg(\frac{1}{z^2} T^{(\alpha)} \Big( \mathcal{M}_{\mu_Y} \big( \frac{1}{z^2} \big) \Big)\bigg)}{\mathcal{M}_{\mu_Y} \big( \frac{1}{z^2} \big) +1} \\
    \zeta^*_b =  \alpha z\frac{\mathcal{C}^{(\alpha)}_{\mu_Z}\bigg(\frac{1}{z^2} T^{(\alpha)} \Big( \mathcal{M}_{\mu_Y} \big( \frac{1}{z^2} \big) \Big)\bigg)}{\alpha \mathcal{M}_{\mu_Y} \big( \frac{1}{z^2} \big) +1}
    \end{cases}
    \label{zeta_sol-app}
\end{equation}

\section{Derivation of Rectangular Free Convolution}\label{free-add-conv}
Consider the normalized trace of the first block on each side in \eqref{resolvent-relation}. The trace of the first block of the lhs is computed in \eqref{trace of first LHS} which is $\frac{1}{z} \mathcal{M}_{\mu_Y}\big( \frac{1}{z^2} \big) + \frac{1}{z}$. The trace of the first block in rhs is computed in \eqref{trof1stblock} which is $(z-\zeta_b^*) \mathcal{G}_{\rho_S}\big( (z-\zeta_b^*)(z - \zeta_a^*) \big)$.
\begin{equation*}
    \begin{split}
        &\frac{1}{z} \mathcal{M}_{\mu_Y}\big( \frac{1}{z^2} \big) + \frac{1}{z} = (z-\zeta_b^*) \mathcal{G}_{\rho_S}\big( (z-\zeta_b^*)(z - \zeta_a^*) \big) \\
        &= (z-\zeta_b^*) \frac{1}{(z-\zeta_b^*)(z - \zeta_a^*)} \Big(\mathcal{M}_{\mu_S} \big( \frac{1}{(z-\zeta_b^*)(z - \zeta_a^*)} \big) + 1 \Big) \\
        &= \frac{1}{z - \zeta_a^*} \mathcal{M}_{\mu_S} \big( \frac{1}{(z-\zeta_b^*)(z - \zeta_a^*)} \big) + \frac{1}{z - \zeta_a^*}
    \end{split}
\end{equation*}
From which, we get:
\begin{equation*}
    (z - \zeta_a^*) \mathcal{M}_{\mu_Y}\big( \frac{1}{z^2} \big) + z - \zeta_a^* = z \mathcal{M}_{\mu_S} \big( \frac{1}{(z-\zeta_b^*)(z - \zeta_a^*)} \big) + z
\end{equation*}

Taking the $\zeta_a^*$ to the rhs, and plugging the expression for $\zeta_a^*$ from \eqref{zeta_sol-app}, after a bit of algebra we find:
\begin{align}
        &\mathcal{M}_{\mu_Y}\big( \frac{1}{z^2} \big) \label{eq27} \\
        &= \mathcal{M}_{\mu_S} \big( \frac{1}{(z-\zeta_b^*)(z - \zeta_a^*)} \big) + \mathcal{C}^{(\alpha)}_{\mu_Z}\bigg(\frac{1}{z^2} T^{(\alpha)} \Big( \mathcal{M}_{\mu_Y}\big( \frac{1}{z^2} \big) \Big)\bigg) \notag
\end{align}

Let $\frac{1}{z^2} T^{(\alpha)} \Big( \mathcal{M}_{\mu_Y}\big( \frac{1}{z^2} \big) \Big) = u$. Then, $\frac{1}{z^2} =  {\mathcal{H}_{\mu_Y}^{(\alpha)}}^{-1}(u)$. Moreover, from the definition one can see that $\mathcal{M}_{\mu_Y}\Big( {\mathcal{H}_{\mu_Y}^{(\alpha)}}^{-1}(u) \Big) = \mathcal{C}^{(\alpha)}_{\mu_Y} (u)$. So, \eqref{eq27} can be written as:
\begin{equation}
    \mathcal{C}^{(\alpha)}_{\mu_Y} (u) = \mathcal{M}_{\mu_S} \big( \frac{1}{(z-\zeta_b^*)(z - \zeta_a^*)} \big) + \mathcal{C}^{(\alpha)}_{\mu_Z}(u)
    \label{relation for u}
\end{equation}

From \eqref{zeta_sol-app},
\begin{equation*}
    \begin{split}
        &(z-\zeta_b^*)(z - \zeta_a^*) = z^2 \Big( 1 - \frac{\mathcal{C}^{(\alpha)}_{\mu_Z}(u)}{\mathcal{C}^{(\alpha)}_{\mu_Y}(u) + 1} \Big) \Big( 1 - \frac{\alpha \mathcal{C}^{(\alpha)}_{\mu_Z}(u)}{\alpha \mathcal{C}^{(\alpha)}_{\mu_Y}(u) + 1} \Big) \\
        &= \frac{z^2}{T^{(\alpha)}\big( \mathcal{C}^{(\alpha)}_{\mu_Y}(u)\big)} T^{(\alpha)} \big(  \mathcal{C}^{(\alpha)}_{\mu_Y}(u) -  \mathcal{C}^{(\alpha)}_{\mu_Z}(u) \big)
    \end{split}
\end{equation*}
The first factor, using the definition of $ \mathcal{C}^{(\alpha)}_{\mu_Y}(u)$, is:
\begin{equation*}
    \begin{split}
        \frac{z^2}{T^{(\alpha)}\big( \mathcal{C}^{(\alpha)}_{\mu_Y}(u)\big)} &= \frac{1}{\frac{1}{z^2}} \frac{1}{T^{(\alpha)}\big( \mathcal{C}^{(\alpha)}_{\mu_Y}(u)\big)} \\
        &= \frac{1}{{\mathcal{H}_{\mu_Y}^{(\alpha)}}^{-1}(u)} \frac{1}{\frac{u}{{\mathcal{H}_{\mu_Y}^{(\alpha)}}^{-1}(u)}} = \frac{1}{u}
    \end{split}
\end{equation*}

So, \eqref{relation for u} can be written as
\begin{equation*}
    \mathcal{C}^{(\alpha)}_{\mu_Y}(u) -  \mathcal{C}^{(\alpha)}_{\mu_Z}(u) = \mathcal{M}_{\mu_S} \Big( \frac{u}{T^{(\alpha)} \big(  \mathcal{C}^{(\alpha)}_{\mu_Y}(u) -  \mathcal{C}^{(\alpha)}_{\mu_Z}(u) \big)} \Big)
\end{equation*}

One can see that, if the limiting singular value distribution of $\bS$, is not $\delta(x)$, the unique solution to the equation $ \mathcal{M}_{\mu_S}  \big( \frac{u}{T^{(\alpha)}(x)} \big) = x$, is $x = \mathcal{C}^{(\alpha)}_{\mu_S}(u)$ (see lemma 4.2 in \cite{benaych2011rectangular} for a particular case). Therefore, we find:
\begin{equation}
    \mathcal{C}^{(\alpha)}_{\mu_Y}(u) -  \mathcal{C}^{(\alpha)}_{\mu_Z}(u) = \mathcal{C}^{(\alpha)}_{\mu_S}(u)
\end{equation}
as we expected.

\section{Computation of MMSE for the Gaussian Noise}\label{com-G-MMSE}
In this section, we compute the following integral
\begin{equation*}
    \int \big( x - \frac{1-\alpha}{\alpha} \frac{1}{x} - 2 \pi \sH [\bar{\mu}_{Y}](x) \big)^2 \, \mu_Y(x) \, dx
\end{equation*}
For simplicity we denote $\sH [\bar{\mu}_{Y}]$ by $\bar{\sH}$. Expanding the integrand, we find
\begin{equation}
\begin{split}
    x^2 + &\big( \frac{1-\alpha}{\alpha} \big)^2 \frac{1}{x^2} -  2 \frac{1-\alpha}{\alpha}\\
    &+ 4 \pi^2 \big(\bar{\sH}(x)\big)^2 - 4 \pi x \bar{\sH}(x) + 4 \pi \frac{1-\alpha}{\alpha} \frac{\bar{\sH}(x)}{x}
\end{split}
\label{expansion}
\end{equation}

To compute the expectation of the last three terms, we need the following properties of the Hilbert transform. 
\begin{lemma}\label{properties of Hilbert}
If $f : \bR \to \bR$ is compactly supported and
sufficiently regular, then one has the identities
\begin{equation}
   \int_\bR f(x)\big( \sH [f] (x) \big)^2 \, dx = \frac{1}{3} \int_\bR f^3(x) \, dx
   \label{Hilbert-1}
\end{equation}
\begin{equation}
   \int_\bR \sH [f] (x) x f(x) \, dx = \frac{1}{2 \pi} \Big( \int_\bR f(x) \, dx \Big)^2
   \label{Hilbert-2}
\end{equation}
\begin{equation}
   \int_\bR \frac{\sH [f] (x)}{x}  f(x) \, dx = -\frac{1}{2 \pi} \Big( \int_\bR \frac{f(x)}{x} \, dx \Big)^2
   \label{Hilbert-3}
\end{equation}
\label{Hilbert-iden}
\end{lemma}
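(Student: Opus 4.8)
\emph{Proof proposal.} The plan is to encode $f$ and $\sH[f]$ simultaneously as the boundary values of a single holomorphic function and to read each identity off from a contour integral of a suitable power of that function; the imaginary part of the resulting complex identity will be exactly the claimed real identity.

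First I would introduce, for $f$ compactly supported and H\"older continuous (or smooth), its Cauchy transform $F(z)=\frac1\pi\int_\bR\frac{f(t)}{z-t}\,dt$, which is holomorphic on $\mathbb C\setminus\operatorname{supp}(f)$ -- in particular on the whole open lower half-plane -- and extends continuously to its closure. By the Plemelj relation \eqref{Plemelj formulae} applied to the density $f$ (up to the overall factor $1/\pi$), its boundary value from below is $F(x-\ci 0^+)=\sH[f](x)+\ci f(x)$, and expanding $1/(z-t)$ at infinity gives $F(z)=\dfrac{m_0}{\pi z}+O(z^{-2})$ with $m_0:=\int_\bR f$.

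Next, for \eqref{Hilbert-1} I would integrate $F(z)^3$ over the boundary of the large lower half-disk $\{|z|<R,\ \operatorname{Im} z<0\}$: since $F^3$ is holomorphic inside and $O(|z|^{-3})$, the arc contributes nothing and Cauchy's theorem gives $\int_\bR F(x)^3\,dx=0$; taking imaginary parts of $(\sH[f]+\ci f)^3$ yields $3\int f(\sH[f])^2=\int f^3$. For \eqref{Hilbert-2} I would run the same contour with integrand $zF(z)^2$: now $zF^2\sim \frac{m_0^2}{\pi^2 z}$, so the large arc (total angle $\pi$) contributes $\ci m_0^2/\pi$, whence $\int_\bR xF(x)^2\,dx=\ci m_0^2/\pi$, and the imaginary part gives $2\int x\,f\,\sH[f]=m_0^2/\pi$, i.e.\ \eqref{Hilbert-2}. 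For \eqref{Hilbert-3} I would use integrand $F(z)^2/z$ on the same half-disk but indented by a small semicircle of radius $\epsilon$ around the origin: no poles inside, the large arc vanishes ($O(R^{-3})$), and the small arc contributes $-\ci\pi F(0-\ci 0^+)^2$, so ${\rm p.v.}\int_\bR \frac{F(x)^2}{x}\,dx=-\ci\pi\bigl(\sH[f](0)+\ci f(0)\bigr)^2$; combining the imaginary part with the elementary relation ${\rm p.v.}\int \frac{f(t)}{t}\,dt=-\pi\,\sH[f](0)$ gives \eqref{Hilbert-3}.

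I expect the main obstacle to be the analytic bookkeeping rather than the algebra: justifying the boundary-value formula and the continuity of $F$ up to $\bR$, pinning down the leading asymptotics at infinity (so the arc contributions are exactly $0$, $\ci m_0^2/\pi$, and $0$), and -- for \eqref{Hilbert-3} -- the behaviour at the origin. In fact the computation above produces an extra term $+\frac{\pi}{2}f(0)^2$ in \eqref{Hilbert-3} unless $f(0)=0$; this is harmless for the paper since there $f=\bar\mu_Y$ has support bounded away from $0$, but I would either add the hypothesis $0\notin\operatorname{supp}(f)$ or state \eqref{Hilbert-3} with that caveat. A purely real-variable alternative for \eqref{Hilbert-1}, writing $(\sH[f])^2$ as a double principal-value integral and performing the outer integration first via the Poincar\'e--Bertrand formula, is also available but messier.
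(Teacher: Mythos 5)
Your contour-integral approach is correct, self-contained, and genuinely different from what the paper does. The paper cites Lemma~3.1 of \cite{shlyakhtenko2020fractional} for \eqref{Hilbert-1}--\eqref{Hilbert-2} and proves only \eqref{Hilbert-3}, and it does so by a real-variable symmetrization trick: writing the left-hand side as a double principal-value integral, swapping the dummies, averaging, and observing that the combined kernel $\bigl(\tfrac1x-\tfrac1y\bigr)\tfrac{1}{x-y}=-\tfrac{1}{xy}$ has no diagonal singularity. Your method instead treats all three identities on the same footing via the Cauchy transform $F(z)=\tfrac1\pi\int f(t)/(z-t)\,dt$, the Plemelj boundary value $F(x-\ci0^+)=\sH[f](x)+\ci f(x)$, and a half-plane contour for $F^3$, $zF^2$, $F^2/z$; this has the advantage of deriving \eqref{Hilbert-1} and \eqref{Hilbert-2} rather than importing them, and of making the asymptotic bookkeeping at $\infty$ (and at $0$) fully explicit.

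More importantly, you have surfaced a genuine gap in the statement of \eqref{Hilbert-3} as written. Your contour computation shows that in general
\begin{equation*}
  \int_\bR \frac{\sH[f](x)}{x}\,f(x)\,dx \;=\; -\frac{1}{2\pi}\Bigl(\,{\rm p.v.}\!\int_\bR \frac{f(x)}{x}\,dx\Bigr)^2 \;+\; \frac{\pi}{2}\,f(0)^2,
\end{equation*}
so \eqref{Hilbert-3} as stated requires $f(0)=0$ (e.g.\ $0\notin{\rm supp}(f)$). A concrete counterexample to the uncorrected identity: take $f(x)=\tfrac{1}{2\pi}\sqrt{4-x^2}\,\mathbf 1_{[-2,2]}$ (the semicircle), so that $\sH[f](x)=x/(2\pi)$ on the support. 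Then the left-hand side equals $\tfrac{1}{4\pi^2}\int_{-2}^{2}\sqrt{4-x^2}\,dx=\tfrac{1}{2\pi}$, while the right-hand side of \eqref{Hilbert-3} vanishes by parity; the discrepancy is exactly $\tfrac{\pi}{2}f(0)^2=\tfrac{1}{2\pi}$. The paper's symmetrization argument silently assumes the absence of the $x=0$, $y=0$ singularities when exchanging the order of the two principal values, which is only legitimate if $f(0)=0$; otherwise a Poincar\'e--Bertrand-type boundary term (the one you computed) is dropped. As you note, this is harmless in the paper's application: for $\alpha=1$ the offending term in \eqref{expansion} carries the prefactor $(1-\alpha)/\alpha=0$, and for $\alpha<1$ the symmetrized density $\bar\mu_Y$ has a spectral gap around $0$. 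But the lemma should carry the hypothesis $f(0)=0$ (or the correction term) to be correct as a standalone statement.
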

\begin{proof}
The proof of the first two properties can be found in Lemma 3.1 of \cite{shlyakhtenko2020fractional}. To prove the last equality, we apply the same idea as in remark 3.2 of the above paper to write:
\begin{equation*}
    \begin{split}
        \int_\bR \frac{\sH [f] (x)}{x}  f(x) \, dx&= \frac{1}{2 \pi} \iint \big( \frac{1}{x} - \frac{1}{y} \big) \frac{1}{x-y} f(x) f(y) \, dx \, dy \\
        &= - \frac{1}{2 \pi} \iint  \frac{1}{xy} f(x) f(y) \, dx \, dy \\
        &= - \frac{1}{2 \pi} \Big( \int \frac{f(x)}{x} \, dx \Big)^2
    \end{split}
\end{equation*}
\end{proof}
Moreover, we use the fact that the Hilbert transform  of an even function is an odd function \cite{kschischang2006hilbert}, that $\bar{\sH}(x)$ is an odd function.

From \eqref{Hilbert-1}, we have:
\begin{equation}
     \int \big(\bar{\sH}(x)\big)^2 \bar{\mu}_Y(x)\, dx = \frac{1}{3} \int {\bar{\mu}_Y(x)}^3 \, dx
     \label{first-term}
\end{equation}
The lhs can be written as:
\begin{equation*}
\begin{split}
        & \frac{1}{2} \int_{\bR_+} \big(\bar{\sH}(x)\big)^2 \mu_Y(x)\, dx + \frac{1}{2} \int_{\bR_-} \big(\bar{\sH}(x)\big)^2 \mu_Y(-x)\, dx \\
        &= \frac{1}{2} \int_{\bR_+} \big(\bar{\sH}(x)\big)^2 \mu_Y(x)\, dx +  \frac{1}{2} \int_{\bR_+} \big(\bar{\sH}(-x)\big)^2 \mu_Y(x)\, dx \\
        &= \frac{1}{2} \int_{\bR_+} \big(\bar{\sH}(x)\big)^2 \mu_Y(x)\, dx +  \frac{1}{2} \int_{\bR_+} \big(\bar{\sH}(x)\big)^2 \mu_Y(x)\, dx \\
        &= \int_{\bR_+} \big(\bar{\sH}(x)\big)^2 \mu_Y(x)\, dx
\end{split}
\end{equation*}
The rhs in \eqref{first-term} equals $ \frac{1}{12} \int {\mu_Y(x)}^3 \, dx$. Therefore, the expectation of the fourth term in \eqref{expansion} is:
\begin{equation}
   4 \pi^2 \int  \big(\bar{\sH}(x)\big)^2 \mu_Y(x) \, dx = \frac{\pi^2}{3}  \int {\mu_Y(x)}^3 \, dx
   \label{expec1}
\end{equation}

Similarly, using symmetry properties of $\bar{H}, \bar{\mu}_Y$ we have that:
\begin{equation*}
    \int x \bar{H}(x) \bar{\mu}_Y(x) \, dx = \int x \bar{H}(x) \mu_Y(x) \, dx
\end{equation*}
Thus, by \eqref{Hilbert-2}, the expectation of the fifth term in \eqref{expansion} is:
\begin{equation}
    -4 \pi \int x \bar{H}(x) \mu_Y(x) \, dx = -2  \Big( \int_\bR \bar{\mu}_Y(x) \, dx \Big)^2 = -2
    \label{expec2}
\end{equation}

Again, by symmetry, we have that:
\begin{equation*}
    \int  \frac{\bar{H}(x)}{x} \bar{\mu}_Y(x) \, dx = \int \frac{\bar{H}(x)}{x}  \mu_Y(x) \, dx
\end{equation*}
Thus, by \eqref{Hilbert-3}, the expectation of the last term in \eqref{expansion} is:
\begin{equation}
    \int \frac{\bar{H}(x)}{x}  \mu_Y(x) \, dx =  \Big( \int_\bR \frac{\bar{\mu}_Y(x)}{x} \, dx \Big)^2 = 0
    \label{expec3}
\end{equation}
where we used that $\frac{\bar{\mu}_Y(x)}{x}$ is an odd function.

From \eqref{expansion}, \eqref{expec1}, \eqref{expec2}, \eqref{expec3}, we get \eqref{expect-xi*}.

\section{Proof Steps of Theorem 1}\label{Proof-Thm1}
In this section, we present the steps we need to prove theorem \ref{main-th}.We make the following assumption:

\begin{assumption}\label{assumptions on law}
 The empirical singular value distribution of $\bS$ converges almost surely weakly to a well-defined probability density function $\mu_S(x)$ with compact support in $[C_1, C_2]$ with $C_1, C_2 \in \bR_{\geq 0}$. Moreover, $\mu_S$ has bounded second moment $\int x^2 \, d \mu_S < \infty$, finite non-commutative entropy $\iint \ln | x - y |  d\mu_S(x) d\mu_S(y) > -\infty$, and $\int \ln |x| d\mu_S(x) > - \infty$.
\end{assumption}

\begin{assumption}\label{bounded-mom}
The second moment of $\mu_{S}^{(N)}$ is almost surely bounded.
\end{assumption}


 We start from the posterior distribution of the model \eqref{observation-matrix} which reads (up to some constants):
\begin{equation}
    \begin{split}
        P (\bX | \bY) &\propto e^{-\frac{N}{2} \| \bY - \sqrt{\lambda}\bX \|_F^2 } P_{S}(\bX) \\
        & \propto e^{N \Tr \big[ \sqrt{\lambda}\bX \bY^T - \frac{\lambda}{2} \bX \bX^T \big] } P_{S}(\bX)
    \end{split}
    \label{post-model1}
\end{equation}
The partition function is defined as the normalizing factor of the posterior distribution \eqref{post-model1}:
\begin{equation}
    Z(\bY) = \int d \bX e^{N \Tr \big[ \sqrt{\lambda}\bX \bY^T - \frac{\lambda}{2} \bX \bX^T \big] } P_{S}(\bX)
    \label{partition-function-def}
\end{equation}
and the free energy is defined as:
\begin{equation}
    F_N (\lambda) = -\frac{1}{M N} \bE_{Y} \big[ \ln Z(\bY) \big]
    \label{free-energy-def}
\end{equation}
One can easily see that the free energy is linked to the (average) mutual information via the relation:
\begin{equation*}
    \frac{1}{M N} \mathcal{I}_N(\bS;\bY) = F_N (\lambda) + \frac{\lambda}{2 M} \bE \big[ \Tr \bS \bS^T \big]
\end{equation*}
in which $\frac{1}{M} \bE \big[ \Tr \bS^T \bS \big]$ converges to the second moment of $\mu_S$ rescaled by the factor $\alpha$. Therefore, to prove theorem \ref{main-th}, it is enough to show that
\begin{equation*}
     \lim_{N \to \infty} F_N (\lambda)  = \frac{\lambda}{2} \alpha \int x^2 \mu_{S}(x) \, dx - J[\mu_{\sqrt{\lambda} S}, \mu_{\sqrt{\lambda} S}\boxplus_{\alpha} \mu_{\rm MP}]
\end{equation*}
To prove this limit, first, we show that this limit also holds for the free energy of a simpler model. Then, using the \textit{pseudo-Lipschitz} continuity of the free energy w.r.t. to a distance between two models which converges to $0$ as $N \to \infty$, we deduce that the same limit holds for the free energy of the original model.

\subsection{A simple model}
Suppose $\bsig^0 \in \bR^N$ is generated with i.i.d. elements from $\mu_S$. Fix $\bsig^0$ once for all. Construct the matrix $\tilde{\bSig} \in \bR^{N \times M}$from the vector $ \tilde{\bsig} \in \bR^N$. Construct the matrix $\tilde{\bS} \in \bR^{N \times M}$  as $\bU \tilde{\bSig} \bV^T$ where $\bU \in \bR^{N \times N}, \bV \in \bR^{M \times M}$ are independent and distributed according to the Haar measure. The distribution of the matrix $\tilde{\bS}$ is :
\begin{equation}
\begin{split}
        d P_{\tilde{S}}(\tilde{\bS}) &= d \mu_N(\bU) \,  d \mu_M(\bV) d p_{\tilde{S}}(\tilde{\bsig}) \\
        & \propto d \mu_N(\bU) \, d \mu_M(\bV) \, \prod_{i=1}^N \delta(\tilde{\sigma}_i - \sigma^0_i) \, d \tilde{\bsig}
\end{split}
\label{model 2}
\end{equation}
 
Matrix $\tilde{\bS}$ is observed through an AWGN channel as in \eqref{observation-matrix}, $\tilde{\bY} = \sqrt{\lambda} \tilde{\bS} + \tilde{\bZ}$. The partition function and the free energy can be defined in the same way as in \eqref{partition-function-def},\eqref{free-energy-def} denoted by $\tilde{Z}(\tilde{\bY})$, $\tilde{F}_N(\lambda)$ respectively.
\begin{proposition}
For $\mu_S$ with compact support, and any $\lambda > 0$, we have $\mu_S$-almost surely
\begin{equation*}
    \lim_{N \to \infty} \tilde{F}_N(\lambda) = \frac{\lambda}{2} \alpha \int x^2 \rho_{S}(x) \, dx - J[\mu_{\sqrt{\lambda} S}, \mu_{\sqrt{\lambda} S}\boxplus_{\alpha} \mu_{\rm MP}] 
\end{equation*}
\label{asymp-free-energy-2-proposition}
\end{proposition}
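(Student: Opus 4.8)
Proposition \ref{asymp-free-energy-2-proposition} rests on the remark that, for the simple model, the partition function is \emph{exactly} a rectangular spherical integral. Substituting $\tilde{\bS}=\bU\tilde{\bSig}\bV^{T}$ in \eqref{partition-function-def} and using the Haar description \eqref{model 2} of $P_{\tilde S}$, the quadratic self-term is rotation invariant, $\Tr[\bU\tilde{\bSig}(\bU\tilde{\bSig})^{T}]=\Tr[\tilde{\bSig}\tilde{\bSig}^{T}]=\sum_{i=1}^{N}(\sigma^{0}_{i})^{2}$, so it pulls out of the $\bU,\bV$ integral; since $N\sqrt{\lambda}\,\Tr[\tilde{\bY}^{T}\bU\tilde{\bSig}\bV^{T}]=N\,\Tr[\tilde{\bY}^{T}\bU(\sqrt{\lambda}\tilde{\bSig})\bV^{T}]$ is precisely the exponent defining $I_{N,M}$, this gives
\begin{equation*}
    \tilde{Z}(\tilde{\bY})=e^{-\frac{N\lambda}{2}\sum_{i=1}^{N}(\sigma^{0}_{i})^{2}}\,I_{N,M}\big(\tilde{\bY},\sqrt{\lambda}\,\tilde{\bSig}\big),
\qquad
    \tilde{F}_{N}(\lambda)=\frac{\lambda}{2M}\sum_{i=1}^{N}(\sigma^{0}_{i})^{2}-\frac{1}{MN}\bE_{\tilde Y}\big[\ln I_{N,M}(\tilde{\bY},\sqrt{\lambda}\tilde{\bSig})\big].
\end{equation*}
The first term is deterministic once $\bsig^{0}$ is fixed, and by the strong law of large numbers (the $\sigma^{0}_{i}$ are i.i.d.\ from $\mu_{S}$, which has finite second moment) together with $N/M\to\alpha$ it converges $\mu_{S}$-a.s.\ to $\tfrac{\lambda}{2}\alpha\int x^{2}\mu_{S}(x)\,dx$. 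It remains to identify the limit of the averaged log-spherical integral with the functional $J$.

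For the spherical-integral term I would argue in three steps. First, by invariance of the Haar measure, $I_{N,M}(\bA,\bB)$ depends on $\bA,\bB$ only through their singular values, so $\tfrac1{N^{2}}\ln I_{N,M}(\tilde{\bY},\sqrt{\lambda}\tilde{\bSig})$ is a functional of the empirical singular value distributions of $\tilde{\bY}$ and of $\sqrt{\lambda}\tilde{\bSig}$ alone. Second, the empirical singular value distribution of $\sqrt{\lambda}\tilde{\bSig}$ converges $\mu_{S}$-a.s.\ to $\mu_{\sqrt{\lambda}S}$ (law of large numbers), with support inside the compact interval $[\sqrt{\lambda}C_{1},\sqrt{\lambda}C_{2}]$ of Assumption \ref{assumptions on law}; and $\tilde{\bY}=\sqrt{\lambda}\tilde{\bS}+\tilde{\bZ}$ is the sum of the bi-orthogonally invariant matrix $\sqrt{\lambda}\tilde{\bS}$ (fixed spectrum) and an independent Gaussian (hence bi-orthogonally invariant) matrix $\tilde{\bZ}$, so these are asymptotically free in the rectangular sense \cite{benaych2009rectangular}; therefore the ESD of $\tilde{\bY}$ converges a.s.\ to $\mu_{\sqrt{\lambda}S}\boxplus_{\alpha}\mu_{\rm MP}$, and standard operator-norm bounds for $\sqrt{\lambda}\tilde{\bS}+\tilde{\bZ}$ keep the top singular value of $\tilde{\bY}$ bounded, so this limiting law has compact support. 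Third, with both limiting spectra compactly supported, Theorem~1.1 of \cite{guionnet2021large} applies and yields, $\mu_S$-a.s. (and a.s.\ over $\tilde{\bZ},\bU,\bV$),
\begin{equation*}
    \frac{1}{N^{2}}\ln I_{N,M}\big(\tilde{\bY},\sqrt{\lambda}\tilde{\bSig}\big)\;\longrightarrow\;J\big[\mu_{\sqrt{\lambda}S},\ \mu_{\sqrt{\lambda}S}\boxplus_{\alpha}\mu_{\rm MP}\big].
\end{equation*}

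Finally I would upgrade this almost-sure statement to the convergence of $\bE_{\tilde Y}$ and assemble the pieces. A uniform lower bound is $\ln I_{N,M}\ge N\,\Tr[\tilde{\bY}^{T}(\bE\bU)(\sqrt{\lambda}\tilde{\bSig})(\bE\bV^{T})]=0$ by Jensen and $\bE\bU=\bE\bV=0$; a uniform upper bound follows from von Neumann's trace inequality, $\Tr[\tilde{\bY}^{T}\bU(\sqrt{\lambda}\tilde{\bSig})\bV^{T}]\le\|\tilde{\bY}\|_{F}\,\|\sqrt{\lambda}\tilde{\bSig}\|_{F}$, so that $\tfrac1{N^{2}}\ln I_{N,M}\le\sqrt{\lambda}\,\big(\tfrac1N\|\tilde{\bY}\|_{F}^{2}\big)^{1/2}\big(\tfrac1N\|\tilde{\bSig}\|_{F}^{2}\big)^{1/2}$; the right-hand side has bounded expectation and concentrates (it is an $O(N^{-1/2})$-Lipschitz function of the Gaussian $\tilde{\bZ}$), giving uniform integrability and allowing the exchange of limit and expectation. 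Combining the two limits, with the normalization $\tfrac{1}{MN}\ln I_{N,M}=\tfrac{N}{M}\cdot\tfrac{1}{N^{2}}\ln I_{N,M}$ and $N/M\to\alpha$, then yields the stated identity. I expect the real work to be in the second paragraph: verifying in full that the pair of limiting spectral measures — in particular $\mu_{\sqrt{\lambda}S}\boxplus_{\alpha}\mu_{\rm MP}$ — satisfies every regularity hypothesis of the rectangular log-spherical-integral theorem of \cite{guionnet2021large} (compactness of its support, and exclusion of spurious outlier singular values of $\tilde{\bY}$), together with the quantitative concentration bound underpinning the uniform-integrability step.
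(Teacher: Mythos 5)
Your proposal reproduces the paper's own route: rewrite $\tilde{Z}$ exactly as a rectangular spherical integral, isolate the SLLN term, and invoke Theorem~1.1 of \cite{guionnet2021large} for the a.s.\ limit of the normalized log-spherical integral, then pass the limit inside $\bE_{\tilde Y}$ by domination. The one structural difference: the paper first uses the rotational invariance of $\tilde{\bZ}$ to absorb the Haar matrices, reducing $\bE_{\bU,\bV,\tilde{\bZ}}\big[J_N(\sqrt{\lambda}\bSig^0,\sqrt{\lambda}\bU\bSig^0\bV^T+\tilde{\bZ})\big]$ to $\bE_{\tilde{\bZ}}\big[J_N(\sqrt{\lambda}\bSig^0,\sqrt{\lambda}\bSig^0+\tilde{\bZ})\big]$ with $\bSig^0$ deterministic, so that only the Gaussian is random when the Guionnet--Husson theorem is applied; you keep the full $\tilde{\bY}$ and instead invoke rectangular asymptotic freeness of $\sqrt{\lambda}\tilde{\bS}$ and $\tilde{\bZ}$ to get the same limiting ESD $\mu_{\sqrt{\lambda}S}\boxplus_\alpha\mu_{\rm MP}$ --- both are correct and equivalent. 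For the dominated-convergence step the paper bounds $|J_N|$ by the product of operator norms (citing Lemma~14 of \cite{guionnet2005fourier}) and the Davidson--Szarek control of $\bE[\sigma_1^{\tilde Z}]$, whereas you use von Neumann's trace inequality with Frobenius norms plus Gaussian concentration; your bound is coarser but serves equally well as a dominating sequence. You also correctly flag that the real content is verifying the a.s.\ hypotheses of Theorem~1.1 (compact support, no outliers) for the random sequence, which the paper likewise asserts without writing out.
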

\vspace{-5mm}
\textit{Proof.} Appendix \ref{proof of prop1}.

\subsection{Pseudo-Lipschitz continuity of the free energy}
Consider two rotationally invariant matrix ensemble $P^{(1)}$, $P^{(2)}$, i.e. for $\bS \sim P^{(1)}(\bS)$, $\tilde{\bS} \sim P^{(2)}(\tilde{\bS})$ with SVDs $\bS = \bU \bSig \bV^T$, $\tilde{\bS} = \tilde{\bU} \tilde{\bSig} \tilde{\bV}^T$
\begin{equation*}
\begin{split}
        d P_N^{(1)} (\bS )  &\propto d \mu_N(\bU) \, d \mu_M(\bV) \, p^{(1)} (\bsig ) \, d \bsig \\
        d P_N^{(2)} ( \tilde{\bS} )  &\propto d \mu_N(\tilde{\bU}) \, \propto d \mu_M(\tilde{\bV})\, p^{(2)} (\tilde{\bsig} ) \, d \tilde{\bsig}
\end{split}
\end{equation*}
where  $p^{(1)}(\bsig)$, $p^{(2)}(\tilde{\bsig})$ are the joint probability density functions for the singular values, induced by the priors. Suppose each of these distributions to be the prior of an inference problem in model \eqref{observation-matrix}. The free energy can be defined similarly for each of the priors, which are denoted by $F_N^{(1)}(\lambda), F_N^{(2)}(\lambda)$ respectively. Then, we have

\begin{proposition}
For all $\lambda > 0$ and $N$ :
\begin{align}
    &\big| F_N^{(1)}(\lambda) - F^{(2)}_N(\lambda) \big|  \label{pseudo-lip-ub}\\
    &\leq \frac{\lambda}{2 N} \Big( \sqrt{ \bE_{\bsig} \big[ \| \bsig \|^2 \big]} + \sqrt{ \bE_{\tilde{\bsig}} \big[ \| \tilde{\bsig} \|^2 \big]} \Big) \sqrt{\bE_{\bsig, \tilde{\bsig}} \big[ \| \bsig - \tilde{\bsig} \|^2 \big] }   \notag
\end{align}
\label{pseudo-lip}
\end{proposition}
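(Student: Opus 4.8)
\textit{Proof strategy.} The plan is to join the two problems by a one-parameter family of \emph{matched} inference problems and to control the derivative of the free energy along the path, in the spirit of the interpolation method. Fix a coupling of the singular-value laws, i.e.\ a joint distribution of $(\bsig,\tilde\bsig)$ with marginals $p^{(1)},p^{(2)}$ (the one appearing in the bound), and draw, independently, Haar matrices $\bU\in\bR^{N\times N}$, $\bV\in\bR^{M\times M}$ and a noise matrix $\bZ$ with i.i.d.\ $\mathcal N(0,1/N)$ entries. For $t\in[0,1]$ set $\bsig(t)=(1-t)\,\tilde\bsig+t\,\bsig\ge 0$, let $\bSig(t)$ be the associated rectangular-diagonal matrix, $\bS(t)=\bU\bSig(t)\bV^{T}$, and $\bY(t)=\sqrt{\lambda}\,\bS(t)+\bZ$; let $P^{(t)}$ be the (rotationally invariant) law of $\bS(t)$ and $F_{N}(t)$ the corresponding free energy \eqref{free-energy-def} of model \eqref{observation-matrix}. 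Then $F_{N}(0)=F^{(2)}_{N}(\lambda)$ and $F_{N}(1)=F^{(1)}_{N}(\lambda)$, so $F^{(1)}_{N}-F^{(2)}_{N}=\int_{0}^{1}F_{N}'(t)\,dt$ and it suffices to bound $|F_{N}'(t)|$ uniformly in $t$.

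Represent the partition function by averaging over an independent copy of the hidden data: with $\bX=\bU'\bSig'(t)\bV'^{T}$, $\bSig'(t)$ built from $\bsig'(t)=(1-t)\tilde\bsig'+t\bsig'$ and $(\bU',\bV',\bsig',\tilde\bsig')$ an independent copy of $(\bU,\bV,\bsig,\tilde\bsig)$, the partition function is the average of $e^{E(t)}$ with $E(t):=N\sqrt\lambda\,\Tr[\bX\bY(t)^{T}]-\tfrac{N\lambda}{2}\|\bX\|_{F}^{2}$, which is smooth in $t$. Write $\langle\cdot\rangle$ for the posterior (Gibbs) average, $\dot{\bX}=\bU'\dot{\bSig}'(t)\bV'^{T}$ (diagonal $\bsig'-\tilde\bsig'$) and $\dot{\bS}(t)=\bU\dot{\bSig}(t)\bV^{T}$ (diagonal $\bsig-\tilde\bsig$). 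Differentiating $E(t)$ produces a teacher term $\propto\bE\langle\Tr[\dot{\bX}\,\bS(t)^{T}]\rangle$, a noise term $\propto\bE\langle\Tr[\dot{\bX}\,\bZ^{T}]\rangle$, a second teacher term $\propto\bE\langle\Tr[\bX\,\dot{\bS}(t)^{T}]\rangle$ (coming from $\dot{\bY}(t)=\sqrt\lambda\dot{\bS}(t)$), and a self term $\propto\bE\langle\Tr[\bX\,\dot{\bX}^{T}]\rangle$. Integrating the noise term by parts in $\bZ$ (Gaussian/Stein, using $\partial_{\bZ_{ab}}E(t)=N\sqrt\lambda\,\bX_{ab}$) converts it into $\sqrt\lambda\,\bE\big[\langle\Tr[\dot{\bX}\bX^{T}]\rangle-\langle\Tr[\dot{\bX}^{(1)}(\bX^{(2)})^{T}]\rangle\big]$, with $\bX^{(1)},\bX^{(2)}$ two independent posterior replicas; since $\Tr[\dot{\bX}\bX^{T}]=\Tr[\bX\dot{\bX}^{T}]$ this cancels the self term, and, by the Nishimori identity for the matched model (the teacher being a further replica of the augmented latent $(\bU,\bV,\bsig,\tilde\bsig)$), $\bE\langle\Tr[\dot{\bX}^{(1)}(\bX^{(2)})^{T}]\rangle=\bE\langle\Tr[\dot{\bX}\,\bS(t)^{T}]\rangle$ cancels the first teacher term. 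One is left with
\begin{equation*}
F_{N}'(t)=-\frac{\lambda}{M}\,\bE\big\langle\Tr[\bX\,\dot{\bS}(t)^{T}]\big\rangle .
\end{equation*}

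It remains to estimate this. By Cauchy--Schwarz (in Frobenius norm) and Jensen, $|F_{N}'(t)|\le\tfrac{\lambda}{M}\,\bE\big[\langle\|\bX\|_{F}^{2}\rangle^{1/2}\,\|\dot{\bS}(t)\|_{F}\big]\le\tfrac{\lambda}{M}\,\big(\bE\langle\|\bX\|_{F}^{2}\rangle\big)^{1/2}\big(\bE\|\bsig-\tilde\bsig\|^{2}\big)^{1/2}$, using $\|\dot{\bS}(t)\|_{F}=\|\bsig-\tilde\bsig\|$. Nishimori (the posterior reproduces the prior) gives $\bE\langle\|\bX\|_{F}^{2}\rangle=\bE\|\bS(t)\|_{F}^{2}=\bE\|\bsig(t)\|^{2}$, and the triangle inequality in $L^{2}$ yields $\big(\bE\|\bsig(t)\|^{2}\big)^{1/2}\le t\big(\bE\|\bsig\|^{2}\big)^{1/2}+(1-t)\big(\bE\|\tilde\bsig\|^{2}\big)^{1/2}$. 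Integrating over $t\in[0,1]$ (with $\int_{0}^{1}t\,dt=\int_{0}^{1}(1-t)\,dt=\tfrac12$) gives \eqref{pseudo-lip-ub}, in fact with the sharper constant $\tfrac{\lambda}{2M}$ in place of $\tfrac{\lambda}{2N}$ (recall $N\le M$).

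The routine part is the Cauchy--Schwarz and triangle-inequality bookkeeping once the formula for $F_{N}'(t)$ is in hand; the genuine work is in that formula. The main obstacles are (i) justifying differentiation under the expectation and the Gaussian integration by parts, which is straightforward under the standing moment assumptions (and immediate for compactly supported priors), and (ii) setting up the Nishimori-based cancellations: one must present the interpolated problem as Bayes-optimal recovery of the \emph{augmented} latent $(\bU,\bV,\bsig,\tilde\bsig)$ from $\bY(t)$, so that the teacher and the posterior replicas are exchangeable and the objects $\dot{\bX}$, $\dot{\bS}(t)$ are legitimate functions of these latents. With that in place every term except $-\tfrac{\lambda}{M}\bE\langle\Tr[\bX\dot{\bS}(t)^{T}]\rangle$ disappears and the estimate follows.
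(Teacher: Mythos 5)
Your proof is correct, and it follows the same Guerra-type interpolation strategy that underlies the paper's referenced source (Proposition~B.1 in Miolane): interpolate linearly in the singular values under a coupling, differentiate, kill the noise and first-teacher contributions via Gaussian integration by parts plus the Nishimori identity for the augmented latent, and bound the surviving term by Cauchy--Schwarz and Jensen. One small remark: your argument in fact yields the bound with prefactor $\lambda/(2M)$ rather than $\lambda/(2N)$; since the paper works with $\alpha\le 1$ (so $N\le M$) this is a sharper statement that implies the one claimed.
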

\vspace{-5mm}
Proof of this proposition is similar to the proof of Proposition B.1 in \cite{miolane2019fundamental}.

\subsection{The distance between two models}
Recall that
\begin{equation*}
\begin{split}
        d P_{S}(\bS)  \propto d \mu_N(\bU) \, d \mu_M(\bV) p_{S}(\bsig) \, d \bsig^S
\end{split}
\end{equation*}
where $p_{S}(\bsig)$ is the joint p.d.f. of singular values of $\bS$. Moreover, $d P_{\tilde{S}}(\tilde{\bS})$ is defined in \eqref{model 2} with  $p_{\tilde{S}}(\tilde{\bsig}) \equiv \prod_{i=1}^N \delta(\tilde{\sigma}_i - \sigma^0_i)$, where $\bsig^0$ is generated with i.i.d. elements from $\mu_S$
\begin{lemma}
Under assumptions \ref{assumptions on law}, \ref{bounded-mom}, for $\bsig \sim p_{S}(\bsig)$, $\tilde{\bsig} \sim p_{\tilde{S}}(\tilde{\bsig})$ , we have:
\begin{equation}
    \lim_{N \to \infty} \frac{1}{N} \bE_{\bsig, \tilde{\bsig}} \big[ \| \bsig - \tilde{\bsig} \|^2 \big] =0
    \label{W2-dis-eq}
\end{equation}
\label{W2-dis}
\end{lemma}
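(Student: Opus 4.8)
\end{lemma}

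\begin{proof}[Proof sketch]
The plan is to identify $\tfrac1N\|\bsig-\tilde{\bsig}\|^2$ with a squared quadratic Wasserstein distance between empirical spectral distributions, and then to show that both of these distributions converge, in Wasserstein sense, to the same limit $\mu_S$. Since the free energies compared in Proposition \ref{pseudo-lip} depend only on the marginal laws of the singular values, we are free to choose the joint law of $(\bsig,\tilde{\bsig})$ with the prescribed marginals, and we take the monotone coupling: list the entries of $\bsig$ and of $\tilde{\bsig}=\bsig^0$ in non-increasing order. Optimal transport on the real line for a convex cost is realized by this order-statistics matching, so
\[
\tfrac1N\|\bsig-\tilde{\bsig}\|^2 \;=\; W_2\big(\mu_S^{(N)},\hat{\mu}_N\big)^2 ,
\]
where $\mu_S^{(N)}=\tfrac1N\sum_i\delta_{\sigma_i}$ is the ESD of $\bS$ and $\hat{\mu}_N=\tfrac1N\sum_i\delta_{\sigma^0_i}$. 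By the triangle inequality for $W_2$ it then suffices to establish $W_2(\mu_S^{(N)},\mu_S)\to 0$ and $W_2(\hat{\mu}_N,\mu_S)\to 0$, after which \eqref{W2-dis-eq} follows by passing the expectation over $\bsig$ through, as in the last step.

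For $\hat{\mu}_N$ the argument is elementary: the $\sigma^0_i$ are i.i.d.\ samples from the compactly supported law $\mu_S$, so $\hat{\mu}_N\to\mu_S$ weakly $\mu_S$-almost surely (Varadarajan), and $\int x^2\,d\hat{\mu}_N=\tfrac1N\sum_i(\sigma^0_i)^2\to\int x^2\,d\mu_S$ $\mu_S$-almost surely by the strong law of large numbers; weak convergence together with convergence of second moments is equivalent to convergence in $W_2$, hence $W_2(\hat{\mu}_N,\mu_S)\to 0$. For $\mu_S^{(N)}$, Assumption \ref{assumptions on law} provides weak convergence $\mu_S^{(N)}\to\mu_S$ almost surely; what remains is to upgrade this to $W_2$ convergence, i.e.\ to prove $\int x^2\,d\mu_S^{(N)}=\tfrac1N\Tr\bS\bS^T\to\int x^2\,d\mu_S$ almost surely. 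The bound $\liminf_N \tfrac1N\Tr\bS\bS^T\ge\int x^2\,d\mu_S$ is immediate from weak convergence and Fatou's lemma; the reverse inequality, which forbids a loss of mass to outlying singular values, is where Assumption \ref{bounded-mom} enters, through a uniform-integrability argument for $\{x^2\}$ under the family $\{\mu_S^{(N)}\}$ (together with the fact that $\tfrac1M\bE[\Tr\bS^T\bS]$ has the expected limit, recalled before Proposition \ref{asymp-free-energy-2-proposition}).

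Finally, to pass from the almost-sure convergence $W_2(\mu_S^{(N)},\hat{\mu}_N)^2\to0$ to convergence of $\bE_{\bsig}\big[W_2(\mu_S^{(N)},\hat{\mu}_N)^2\big]$, we use $W_2(\mu_S^{(N)},\hat{\mu}_N)^2\le 2\int x^2\,d\mu_S^{(N)}+2\int x^2\,d\hat{\mu}_N$: the second term is bounded in $N$, while $\{\tfrac1N\Tr\bS\bS^T\}_N$ is uniformly integrable under $P_S$ because it converges both almost surely and in mean to the same constant, so dominated/uniformly-integrable convergence gives $\bE_{\bsig}\big[W_2(\mu_S^{(N)},\hat{\mu}_N)^2\big]\to0$, which is \eqref{W2-dis-eq}. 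The main obstacle is the middle paragraph: upgrading weak convergence of the ESD of $\bS$ to $W_2$ convergence. Weak convergence alone is insufficient, since a vanishing fraction of anomalously large singular values can still contribute an $O(1)$ amount to $\tfrac1N\|\bsig-\tilde{\bsig}\|^2$; controlling the second-moment tail of $\mu_S^{(N)}$ uniformly in $N$ is the crux, and turning ``the second moment of $\mu_S^{(N)}$ is almost surely bounded'' into the required uniform integrability, equivalently $\limsup_N\int x^2\,d\mu_S^{(N)}\le\int x^2\,d\mu_S$, is the delicate point.
\end{proof}
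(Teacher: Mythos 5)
Your proof tracks the paper's argument closely: identify $\tfrac1N\|\bsig-\tilde\bsig\|^2$ with $W_2(\mu_S^{(N)},\hat\mu_N)^2$ via the monotone (order-statistics) coupling, split by triangle inequality through the limit $\mu_S$, handle the i.i.d.\ sample $\hat\mu_N$ by the strong law, handle the ESD $\mu_S^{(N)}$ by weak convergence plus convergence of second moments (Villani, Thm.~7.12), and pass to expectation by domination. The paper states the coupling step by fixing a non-decreasing ordering of $\bsig$ (justified by permutation invariance of $p_S$) and invoking the rearrangement inequality to get both directions of the $\|\bsig-\bsig^0\|^2 = N\,W_2^2$ identity; your "we are free to choose the joint law, and we take the monotone coupling" phrasing is the same content and, if anything, a cleaner way to say it, since the bound in Proposition~\ref{pseudo-lip} holds for any coupling of the two priors.

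The "delicate point" you flag at the end is genuine, and it is equally present in the paper's own proof. Assumption~\ref{bounded-mom} says only that the second moment of $\mu_S^{(N)}$ is almost surely bounded, yet the proof of Lemma~\ref{E-W-0-l} moves directly to "the second moment of the empirical spectral distribution converges almost surely to that of $\mu_S$." Weak convergence plus a merely bounded second moment does not deliver this: the standard counterexample $\mu_N=(1-\tfrac1N)\delta_0+\tfrac1N\delta_{\sqrt N}$ converges weakly to $\delta_0$ with constant second moment $1$. One needs uniform integrability of $x^2$ under $\{\mu_S^{(N)}\}$ — e.g.\ a uniformly bounded $(2+\varepsilon)$-th moment, or an explicit assumption that $\int x^2\,d\mu_S^{(N)}\to\int x^2\,d\mu_S$ a.s.\ — and the same issue recurs in the final dominated-convergence step, where "a.s.\ bounded" must mean "uniformly bounded by a deterministic constant" for the domination to work. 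You have correctly located the place where Assumption~\ref{bounded-mom}, read literally, is a touch too weak; the paper does not close this more carefully than you do, so your proposal should not be faulted for leaving it open, but a fully rigorous version of either proof needs to strengthen the assumption or add a uniform-integrability hypothesis.
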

\vspace{-5mm}
\textit{Proof.} Appendix \ref{proof of lemma 2}.

\subsection{Proof of theorem \ref{main-th}}
By proposition \ref{pseudo-lip},  the distance between the free energies $F_N(\lambda)$ (defined in \eqref{free-energy-def}) and $\tilde{F}_N(\lambda)$ is upper bounded by rhs in \eqref{pseudo-lip-ub}.
The term $\frac{1}{N} \| \bsig \|^2 = \frac{1}{N} \sum \sigma_i^2 $ is the second moment of the empirical spectral distribution of $\bS$, which is almost surely bounded by assumption \ref{bounded-mom}. So,  $\frac{1}{N} \bE_{\bsig} \big[ \| \bsig \|^2 \big]$ is bounded.
 Moreover, $\frac{1}{N} \bE \big[ \| \tilde{\bsig} \|^2 \big] = \frac{1}{N}  \sum {\sigma^0}_i^2$ which is bounded by $C_2^2$. By proposition \ref{W2-dis}, $\lim_{N \to \infty} \frac{1}{N} \bE_{\bsig, \tilde{\bsig}} \big[ \| \bsig - \tilde{\bsig} \|^2 \big] = 0$. Therefore $    \lim_{N \to \infty} |F_N(\lambda) - \tilde{F}_N(\lambda) | = 0$ and Proposition \ref{asymp-free-energy-2-proposition} gives the result. $\hfill \square$

\section{Detailed proof of Theorem 1}\label{proof-thm1-app}
\subsection{Proof of proposition \ref{asymp-free-energy-2-proposition}}\label{proof of prop1}
We start from the partition function,\vspace{-2mm}
\begin{align}
        &\tilde{Z}(\tilde{\bY}) =  \int d \bX e^{N \Tr \big[ \sqrt{\lambda}\bX^T \tilde{\bY} - \frac{\lambda}{2} \bX^T \bX \big] } P_{\tilde{S}}(\bX) \notag \\
        &= \iiint d \bsig \,d \mu_N(\bU) \, d \mu_M(\bV) \, \prod_{i=1}^N \delta(\sigma_i - \sigma^0_i) \notag\\
        & \hspace{30pt}\times  e^{N \Tr [ \sqrt{\lambda} \bV \bSig^T \bU^T \tilde{\bY} - \frac{\lambda}{2}\bSig \bSig^T ]} \notag\\
        &= e^{ - \frac{N}{2} \lambda \Tr {\bSig^0}^T {\bSig^0} } \iint d \mu_N(\bU) \,  d \mu_M(\bV) \, e^{N \Tr [ \sqrt{\lambda} {\bSig^0}^T \bU  \tilde{\bY} \bV^T ]}\notag \\
        &=  e^{ - \frac{N}{2} \lambda \Tr  {\bSig^0}^T {\bSig^0}  } I_N \big( \sqrt{\lambda} \bSig^0, \tilde{\bY} \big)
\label{Z2-comp1}
\end{align}\vspace{-6mm}

\noindent Note that, we change variables $\bU \to \bU^T, \bV \to \bV^T$ in third line to match the definition of the spherical integral.

Recall that $\tilde{Y} = \sqrt{\lambda} \bU \bSig^0 \bV^T + \tilde{\bZ}$, so the free energy can be written as:\vspace{-2mm}
\begin{align}
        \tilde{F}_N(\lambda) &= \bE_{\tilde{\bY}} \Big[ \frac{\lambda}{2 M} \Tr {\bSig^0}^T {\bSig^0}  - J_N\big( \sqrt{\lambda} \bSig^0, \tilde{\bY} \big) \Big] \label{F2-com} \\
        &\hspace{-25pt}=  \frac{\lambda}{2 M} \sum_{i=1}^N {\sigma^0_i}^2 - \bE_{\bU,\bV,\tilde{\bZ}} \Big[ J_N\big( \sqrt{\lambda} \bSig^0, \sqrt{\lambda} \bU \bSig^0 \bV^T + \tilde{\bZ} \big) \Big] \notag
\end{align}
\vspace{-4mm}

\noindent By rotational invariance of $\tilde{\bZ}$, the second term equals \vspace{-2mm}
\begin{equation*}
\begin{split}
        \bE_{\bU,\bV,\tilde{\bZ}} \Big[ J_N\big( \sqrt{\lambda} \bSig^0, \sqrt{\lambda} \bU \bSig^0 \bV^T + \bU \tilde{\bZ} \bV^T \big) \Big]
\end{split}
\end{equation*}
\vspace{-5mm}

\noindent and then, both matrices $\bU, \bV$ can be absorbed into the integration in $J_N$. So, the free energy equals:\vspace{-2mm}
\begin{equation*}
     \tilde{F}_N(\lambda) = \frac{\lambda}{2 M} \sum_{i=1}^N {\sigma^0_i}^2 - \bE_{\tilde{\bZ}} \Big[ J_N\big( \sqrt{\lambda} \bSig^0, \sqrt{\lambda} \bSig^0  +  \tilde{\bZ}\big) \Big]
\end{equation*}
\vspace{-3mm}

By the strong law of large numbers, the first term in \eqref{F2-com} converges to $\frac{\lambda}{2} \alpha \int x^2 \mu_S(x) \, d x$ almost surely, and proposition \ref{asymp-free-energy-2-proposition} follows from the following lemma.
\begin{lemma}
For any $\lambda \in \bR_+$, the sequence $\bE_{\tilde{\bZ}} \Big[ J_N\big( \sqrt{\lambda} \bSig^0, \sqrt{\lambda} \bSig^0  +  \tilde{\bZ}\big) \Big]$ converges to $J[\mu_{\sqrt{\lambda} S}, \mu_{\sqrt{\lambda} S}\boxplus_{\alpha} \mu_{\rm MP}]$ as $N \to \infty$, $\mu_S$-almost surely.
\label{converg-Expec}
\end{lemma}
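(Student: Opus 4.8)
The plan is to deduce Lemma~\ref{converg-Expec} from three ingredients: the asymptotics of rectangular spherical integrals established in \cite{guionnet2021large}; the identification of the two limiting singular value distributions that appear as the arguments of the functional $J[\cdot,\cdot]$; and a uniform integrability argument that promotes the almost sure convergence of $J_N$ to convergence of its $\tilde{\bZ}$-expectation. The first observation is that $I_{N,M}(\bA,\bB)$ depends on $\bA$ and $\bB$ only through their singular values, since one integrates over the full orthogonal groups; hence $J_N\big(\sqrt{\lambda}\bSig^0,\sqrt{\lambda}\bSig^0+\tilde{\bZ}\big)$ is a function of the empirical singular value distributions of $\sqrt{\lambda}\bSig^0$ and of $\sqrt{\lambda}\bSig^0+\tilde{\bZ}$, together with some control on their extreme singular values.

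First I would fix a realization of $\bsig^0$ in the $\mu_S$-probability-one set on which the following convergences hold. By the strong law of large numbers the empirical singular value distribution of $\sqrt{\lambda}\bSig^0$ converges weakly to $\mu_{\sqrt{\lambda} S}$, and since $\mathrm{supp}\,\mu_S\subset[C_1,C_2]$ its extreme singular values converge to $\sqrt{\lambda}$ times the edges of $\mathrm{supp}\,\mu_S$. Next, because $\tilde{\bZ}$ is Gaussian and therefore invariant in law under $\tilde{\bZ}\mapsto\bU\tilde{\bZ}\bV^T$ for Haar $\bU\in O(N)$, $\bV\in O(M)$, while $\bSig^0$ is deterministic, the pair $(\sqrt{\lambda}\bSig^0,\tilde{\bZ})$ is asymptotically free in the rectangular sense of \cite{benaych2009rectangular}; consequently the empirical singular value distribution of $\sqrt{\lambda}\bSig^0+\tilde{\bZ}$ converges weakly, $\tilde{\bZ}$-almost surely, to $\mu_{\sqrt{\lambda} S}\boxplus_{\alpha}\mu_{\rm MP}$, and $\|\tilde{\bZ}\|_{\mathrm{op}}$ (hence $\|\sqrt{\lambda}\bSig^0+\tilde{\bZ}\|_{\mathrm{op}}$) converges almost surely and is uniformly bounded, with its extreme singular values converging. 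On the almost sure event on which all these statements hold, Theorem~1.1 of \cite{guionnet2021large} applies with $\bA_N=\sqrt{\lambda}\bSig^0$ and $\bB_N=\sqrt{\lambda}\bSig^0+\tilde{\bZ}$ and yields $J_N\big(\sqrt{\lambda}\bSig^0,\sqrt{\lambda}\bSig^0+\tilde{\bZ}\big)\to J[\mu_{\sqrt{\lambda} S},\mu_{\sqrt{\lambda} S}\boxplus_{\alpha}\mu_{\rm MP}]$.

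It then remains to upgrade this $\tilde{\bZ}$-pointwise convergence to convergence of the expectation. For this I would use crude two-sided bounds on the spherical integral. By Cauchy--Schwarz and orthogonal invariance of the Frobenius norm, $\Tr[\bA^T\bU\bB\bV^T]\le\|\bA\|_F\|\bB\|_F$, so $I_{N,M}(\bA,\bB)\le\exp\!\big(N\|\bA\|_F\|\bB\|_F\big)$; and by Jensen's inequality together with $\bE_{\bU,\bV}[\bU\bB\bV^T]=0$ one has $I_{N,M}(\bA,\bB)\ge1$. Hence $0\le J_N\big(\sqrt{\lambda}\bSig^0,\sqrt{\lambda}\bSig^0+\tilde{\bZ}\big)\le\frac{1}{M}\|\sqrt{\lambda}\bSig^0\|_F\,\|\sqrt{\lambda}\bSig^0+\tilde{\bZ}\|_F\le c_1+c_2\,\frac{1}{\sqrt{M}}\|\tilde{\bZ}\|_F$ for all large $N$, where $c_1,c_2$ depend only on $\lambda$, $C_2$ and $\alpha$ (using $\sigma^0_i\le C_2$ and $N/M\to\alpha$). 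Since $\bE\big[\frac{1}{M}\|\tilde{\bZ}\|_F^2\big]=1$, the dominating sequence is bounded in $L^2$ and hence uniformly integrable, so $\{J_N\}$ is uniformly integrable; combined with the almost sure convergence this gives convergence in $L^1(\tilde{\bZ})$, i.e. $\bE_{\tilde{\bZ}}\big[J_N(\sqrt{\lambda}\bSig^0,\sqrt{\lambda}\bSig^0+\tilde{\bZ})\big]\to J[\mu_{\sqrt{\lambda} S},\mu_{\sqrt{\lambda} S}\boxplus_{\alpha}\mu_{\rm MP}]$. As the realization of $\bsig^0$ was chosen in a $\mu_S$-full set, this proves the lemma $\mu_S$-almost surely.

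The main obstacle is the careful verification of the hypotheses of Theorem~1.1 of \cite{guionnet2021large} for the random matrix $\bB_N=\sqrt{\lambda}\bSig^0+\tilde{\bZ}$ --- in particular the required control on its extreme singular values and the absence of spectral outliers --- and making the asymptotic rectangular freeness step precise in the $\tilde{\bZ}$-almost sure mode needed for a pathwise application, rather than merely in expectation or in probability. A secondary but necessary point is the bookkeeping of the normalization constants and of the aspect ratio $N/M\to\alpha$ relating $J_N$ to $\lim_{N\to\infty}\frac{1}{N^2}\ln I_{N,M}$.
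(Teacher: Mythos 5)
Your proposal is correct and follows the same two-step structure as the paper's proof: first establish $\tilde{\bZ}$-almost-sure convergence of $J_N$ via Theorem~1.1 of \cite{guionnet2021large}, then use an integrability argument to pass this under $\bE_{\tilde{\bZ}}$. The substantive difference lies in the integrability step. The paper bounds $|J_N|\le \sqrt{\lambda}C_2\big(\sqrt{\lambda}C_2+\sigma_1^{\tilde Z}\big)$ using the refined spherical-integral estimate (the rectangular analogue of Lemma~14 of \cite{guionnet2005fourier}, controlling $\frac{1}{N^2}\ln I_{N,M}$ by a product of operator norms), and then invokes a dominated-convergence argument from $\bE[\sigma_1^{\tilde Z}]\to 1+1/\sqrt{\alpha}$ (\cite{davidson2001local}); since the dominating random variable changes with $N$, this is strictly a Pratt/Vitali-type variant of the DCT rather than the DCT itself. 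You replace the operator-norm estimate by the elementary Cauchy--Schwarz bound $\Tr[\bA^T\bU\bB\bV^T]\le\|\bA\|_F\|\bB\|_F$ together with the trivial lower bound $J_N\ge 0$, giving $0\le J_N\le c_1+c_2\,N^{-1/2}\|\tilde{\bZ}\|_F$; the right-hand side is bounded in $L^2$ because $\bE[\|\tilde{\bZ}\|_F^2]=M$, so $\{J_N\}$ is uniformly integrable and a.s.\ convergence upgrades to $L^1$-convergence. The Frobenius bound is much cruder than the operator-norm bound (off by up to $\sqrt{N}$ per factor), but the $1/N^2$ normalization in $J_N$ absorbs this, so it suffices here; your route avoids the specialized spherical-integral lemma and sidesteps the $N$-dependence of the dominating variable, making the integrability step a transparent $L^2$ computation. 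Your additional remark that the hypotheses of Theorem~1.1 of \cite{guionnet2021large} (ESD convergence of $\sqrt{\lambda}\bSig^0+\tilde{\bZ}$ to $\mu_{\sqrt{\lambda}S}\boxplus_\alpha\mu_{\rm MP}$ and edge control) must be verified pathwise via rectangular freeness correctly fills in a point that the paper's sketch leaves implicit. (Minor bookkeeping: the normalization in your bound should be $\frac{1}{N}\|\bA\|_F\|\bB\|_F$ rather than $\frac{1}{M}$, a constant-factor discrepancy via $N/M\to\alpha$ that does not affect the argument.)
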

\vspace{-5mm}
\textit{Proof sketch.} We show that the assumptions of Theorem 1.1 in \cite{guionnet2021large} holds a.s. for the sequence $\sqrt{\lambda} \bSig^0, \sqrt{\lambda} \bSig^0  +  \tilde{\bZ}\big)$, so $J_N$ converges to $J$ a.s. . To show that the limit also holds under the expectation $\bE_{\tilde{\bZ}}$, we use the fact the $J_N$'s are bounded (see lemma 14 in \cite{guionnet2005fourier} for the symmetric case) by the product of top singular values of $\sqrt{\lambda} \bSig^0$ and $\sqrt{\lambda} \bSig^0  +  \tilde{\bZ}\big)$, by triangle inequality \vspace{-2mm}
\begin{equation*}
    \Big| J_N \big( \sqrt{\lambda} \bSig^0,\sqrt{\lambda} \bSig^0  +  \tilde{\bZ}\big) \Big| \leq \sqrt{\lambda} C_2 \big( \sqrt{\lambda} C_2 + \sigma_1^{\tilde{Z}} \big)
\end{equation*}
\vspace{-4mm}

\noindent By \cite{davidson2001local}, $\bE[\sigma_1^{\tilde{Z}}]$ converges to $1+1/\sqrt{\alpha}$. So, the convergence in expectation of $J_N$ follows from dominated convergence theorem.

\subsection{Proof of lemma \ref{W2-dis}}\label{proof of lemma 2}
First, note that by rotational invariance, $p_{S}(\bsig)$ is invariant under permutations, so without loss of generality, we can assume $\bsig$ is in non-decreasing order.

Since $p_{\tilde{S}}(\tilde{\bsig})$ is a delta distribution, we can easily write
\begin{equation}
    \bE_{\bsig, \tilde{\bsig}} \big[ \| \bsig - \tilde{\bsig} \|^2 \big] =  \bE_{\bsig} \big[ \| \bsig - \bsig^0 \|^2 \big]
    \label{W-2-1}
\end{equation}

For a vector $\bsig$, denote the empirical distribution of its components by $\hat{\mu}_{\bsig}$. The Wasserstein-2 distance between two empirical distributions, $\hat{\mu}_{\bsig}, \hat{\mu}_{\bsig^0}$ is defined as 
\begin{equation*}
\begin{split}
    W_2(\hat{\mu}_{\bsig}, \hat{\mu}_{\bsig^0}) = \sqrt{\inf_{\gamma \in \Gamma(\hat{\mu}_{\bsig}, \hat{\mu}_{\bsig^0})} \bE_{\gamma(x, y)}\big[ ( x  - y)^2 \big]}
\end{split}
\end{equation*}
with $\Gamma(\hat{\mu}_{\bsig}, \hat{\mu}_{\bsig^0})$ is the set of couplings of $\hat{\mu}_{\bsig}, \hat{\mu}_{\bsig^0}$.By the Example in page 5 in \cite{villani2021topics} ,the Wasserstein-2 can be written as
\begin{equation}
\begin{split}
    W_2(\hat{\mu}_{\bsig}, \hat{\mu}_{\bsig^0}) = \sqrt{\min_{\pi \in \mathcal{S}_N}  \frac{1}{N} \| \bsig  - \bsig^0_{\pi} \|^2 }
\end{split}
\label{Wasser-equiv}
\end{equation}
where $\bsig^0_{\pi}$ is the permuted version of $\bsig^0$, and $\mathcal{S}_N$ is the set of all $N$-permutations. So, for a given $\bsig$ and $\bsig^0$ (which have a non-decreasing order), we have (considering the identity permutation)
\begin{equation}
    \| \bsig - \bsig^0 \|^2 \geq N W_2(\hat{\mu}_{\bsig}, \hat{\mu}_{\bsig^0})^2
        \label{W-2-2}
\end{equation}

On the other hand, for any permutation of $\bsig^0$ (in particular, the one which achieves the minimum in \eqref{Wasser-equiv}), we have
\begin{equation*}
    \begin{split}
        \| \bsig  - \bsig^0_{\pi} \|^2 &= \| \bsig \|^2 + \| \bsig^0_{\pi} \|^2 - 2 \bsig^T \bsig^0_{\pi} \\
        & \geq \| \bsig \|^2 + \| \bsig^0_{\pi} \|^2 - 2 \bsig^T \bsig^0 = \| \bsig  - \bsig^0 \|^2
    \end{split}
\end{equation*}
where we used rearrangement inequality \cite{hardy1952inequalities} to get the inequality in the second line. So,
\begin{equation}
    \| \bsig - \bsig^0 \|^2 \leq N W_2(\hat{\mu}_{\bsig}, \hat{\mu}_{\bsig^0})^2
        \label{W-2-3}
\end{equation}

From \eqref{W-2-1}, \eqref{W-2-2},\eqref{W-2-3}, we have
\begin{equation}
  \bE_{\bsig, \tilde{\bsig}} \big[ \| \bsig - \tilde{\bsig} \|^2 \big] = \bE_{\bsig} \big[N W_2(\hat{\mu}_{\bsig}, \hat{\mu}_{\bsig^0})^2 \big]
    \label{W-2-5}
\end{equation}
Lemma \ref{E-W-0-l} concludes the proof.



\begin{lemma}\label{E-W-0-l}
Suppose $\bsig \in \bR^N$ is distributed according to $p_{S}(\bsig)$, and $\bsig^0$ is generated with i.i.d. elements from $\mu_S$. Let $\hat{\mu}_{\bsig}, \hat{\mu}_{\bsig^0}$ be their empirical distribution. We have:
\begin{equation*}
    \lim_{N \to \infty} \bE_{\bsig} \big[ W_2(\hat{\mu}_{\bsig}, \hat{\mu}_{\bsig^0})^2 \big] = 0
\end{equation*}
\end{lemma}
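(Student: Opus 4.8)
The plan is to route both empirical distributions through their common limit $\mu_S$. Fix the realization of the i.i.d.\ sequence generating $\bsig^0$. By the triangle inequality for $W_2$, the elementary bound $(a+b)^2\le 2a^2+2b^2$, and then taking $\bE_{\bsig}$,
\[
\bE_{\bsig}\big[W_2(\hat{\mu}_{\bsig}, \hat{\mu}_{\bsig^0})^2\big]\;\le\; 2\,\bE_{\bsig}\big[W_2(\hat{\mu}_{\bsig}, \mu_S)^2\big] + 2\, W_2(\hat{\mu}_{\bsig^0}, \mu_S)^2 .
\]
It thus suffices to show that $\bE_{\bsig}\big[W_2(\hat{\mu}_{\bsig},\mu_S)^2\big]\to 0$ and that $W_2(\hat{\mu}_{\bsig^0},\mu_S)\to 0$ for $\mu_S$-almost every generating sequence; I handle the two terms separately.

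For the i.i.d.\ term, $\mu_S$ is supported in the compact interval $[C_1,C_2]$ (Assumption~\ref{assumptions on law}), so every $\sigma^0_i\in[C_1,C_2]$ and hence $\hat{\mu}_{\bsig^0}$ is supported in $[C_1,C_2]$. By the strong law of large numbers for empirical measures (Varadarajan), $\hat{\mu}_{\bsig^0}\to\mu_S$ weakly almost surely; on a fixed compact interval weak convergence coincides with convergence in $W_2$, so $W_2(\hat{\mu}_{\bsig^0},\mu_S)\to 0$ a.s., which is what the display above requires. (One could also note $W_2(\hat{\mu}_{\bsig^0},\mu_S)^2\le (C_2-C_1)^2$ and invoke dominated convergence, or cite the quantitative empirical-measure rates of Fournier--Guillin.)

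For the prior term, by Assumption~\ref{assumptions on law} the empirical singular value distribution $\hat{\mu}_{\bsig}=\mu_S^{(N)}$ converges weakly to $\mu_S$ almost surely, and by Assumption~\ref{bounded-mom} its second moment $\tfrac1N\|\bsig\|^2$ is a.s.\ bounded. Upgrading this weak convergence to $W_2$-convergence needs a uniform-integrability control of $x^2$ under $\mu_S^{(N)}$, i.e.\ that no vanishing fraction of singular values carries a non-negligible share of the second moment; granting this (it holds, for instance, once the largest singular value of $\bS$ is eventually bounded a.s., or whenever $\sup_N\bE_{\bsig}[(\tfrac1N\|\bsig\|^2)^{1+\varepsilon}]<\infty$ for some $\varepsilon>0$) one obtains $\tfrac1N\|\bsig\|^2\to\int x^2\mu_S(x)\,dx$ a.s.\ and therefore $W_2(\hat{\mu}_{\bsig},\mu_S)\to 0$ a.s. To pass to expectations I use the coupling bound $W_2(\hat{\mu}_{\bsig},\mu_S)^2\le 2\tfrac1N\|\bsig\|^2+2\int x^2\mu_S(x)\,dx$ together with $\bE_{\bsig}[\tfrac1N\|\bsig\|^2]=\tfrac1N\bE[\Tr\bS\bS^T]\to\int x^2\mu_S(x)\,dx$ (the excerpt records $\tfrac1M\bE[\Tr\bS^T\bS]\to\alpha\int x^2\mu_S$): the right-hand side converges both a.s.\ and in $L^1$ to the same deterministic limit, so Pratt's lemma turns the a.s.\ convergence $W_2(\hat{\mu}_{\bsig},\mu_S)^2\to 0$ into $L^1$-convergence. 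Plugging the two bounds into the display above then proves the lemma.

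The step I expect to be the main obstacle is precisely this uniform-integrability claim in the prior term: almost-sure weak convergence of $\mu_S^{(N)}$ together with mere boundedness of its second moment does not by itself force $W_2$-convergence (second-moment mass could escape through a small number of singular values), so one must either strengthen Assumption~\ref{bounded-mom} to a uniform higher-moment (equivalently bounded-operator-norm) statement, or use the rotational invariance of $P_S$ to rule out such leakage.
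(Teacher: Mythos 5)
Your proof follows the same overall route as the paper's: decompose via the triangle inequality through $\mu_S$, argue $W_2$-convergence to zero of each piece, and then pass to expectations. The two small structural differences are that you push the expectation inside immediately via $(a+b)^2\le 2a^2+2b^2$ whereas the paper first establishes almost-sure convergence of $W_2(\hat{\mu}_{\bsig},\hat{\mu}_{\bsig^0})^2$ and then integrates, and at the very end you invoke Pratt's lemma while the paper uses dominated convergence with the bound $W_2(\hat{\mu}_{\bsig},\hat{\mu}_{\bsig^0})^2\le 2\,m^{(2)}_{\hat{\mu}_{\bsig}}+2C_2^2$. Your Pratt argument is in fact the more careful of the two, since the paper's dominating function $2\,m^{(2)}_{\hat{\mu}_{\bsig}}+2C_2^2$ is random and only claimed to be almost surely bounded, which is not quite a deterministic dominator in the sense DCT requires; Pratt's lemma handles exactly this situation cleanly.

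Your observation about the prior term is a fair reading of the text: the paper justifies $W_2(\hat{\mu}_{\bsig},\mu_S)\to 0$ by appealing to Villani's Theorem 7.12 (weak convergence plus convergence of second moments implies $W_2$-convergence), but Assumption~\ref{bounded-mom} as stated only gives almost-sure boundedness of $\frac{1}{N}\|\bsig\|^2$, not its convergence to $\int x^2\,d\mu_S$. So the paper's phrase ``convergence of second moment'' is asserted rather than derived; one really does need the uniform-integrability (or bounded operator norm, or higher-moment) strengthening you describe. This is a looseness in the paper's proof as well as a gap you explicitly flag and ``grant'' in yours, so neither proof fully closes it; the substance of the argument is the same. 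Other than that, your treatment of the i.i.d.\ term matches the paper's (compact support plus Varadarajan/SLLN), and the rest is correct.
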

\begin{proof}
By triangle inequality, we have:
\begin{equation}
    W_2(\hat{\mu}_{\bsig}, \hat{\mu}_{\bsig^0}) \leq W_2(\hat{\mu}_{\bsig}, \mu_S) + W_2(\hat{\mu}_{\bsig^0}, \mu_S)
\end{equation}
From the weak convergence and convergence of second moment, assumptions \ref{assumptions on law} and \ref{bounded-mom} imply that the second moment of the empirical spectral distribution converges almost surely to the one of $\mu_S$. 
Thus, by  \cite{villani2021topics}(Theorem 7.12),  the empirical singular value distribution in the Wasserstein-2 metric to  $\mu_S$. Hence, the first term approaches $0$ as $N \to \infty$ almost surely.

By law of large numbers and since the support of $\mu_S$ is bounded, the second term also converges $0$ as $N \to \infty$. Therefore, we have $W_2(\hat{\mu}_{\bsig}, \hat{\mu}_{\bsig^0}) \to 0$ almost surely. Consequently, we have that $W_2(\hat{\mu}_{\bsig}, \hat{\mu}_{\bsig^0})^2 \to 0$ almost surely.

One can see that:
\begin{equation}
    \begin{split}
        W_2(\hat{\mu}_{\bsig}, \hat{\mu}_{\bsig^0})^2 & \leq \frac{2}{N} \sum \sigma_i^2 + \frac{2}{N} \sum {\sigma^0_i}^2 \\
        &\leq 2  m^{(2)}_{\hat{\mu}_{\bsig}} + 2 C_2^2
    \end{split}
\end{equation}
with $m^{(2)}_{\hat{\mu}_{\bsig}}$ the second moment of $\hat{\mu}_{\bsig}$ which is almost surely bounded by assumption. Therefore, the result follows by using dominated convergence theorem.
\end{proof}




\end{document}